\newtheorem{lemma}{Lemma}
\newtheorem{theorem}[lemma]{Theorem}
\newtheorem{xmpl}[lemma]{Example}
\newenvironment{example}{\begin{xmpl}\rm}{\end{xmpl}}
\newtheorem{rmark}[lemma]{Remark}
\newenvironment{remark}{\begin{rmark}\rm}{\end{rmark}}
\theoremstyle{definition}
\newtheorem{definition}[lemma]{Definition}
\numberwithin{equation}{section}
\let\epsilon\varepsilon
\let\phi\varphi
\newcommand{\MP}{\textsf{MP}}
\newcommand{\RT}{\textsf{RT}}
\newcommand{\lolli}{\G_{\bowtie}}
\newcommand{\G}{{\cal G}}
\renewcommand{\S}{{\cal S}}
\newcommand{\PO}{Player~$1$\xspace}
\newcommand{\PT}{Player~$2$\xspace}
\newcommand{\PLi}{Player~$i$\xspace}
\newcommand{\thresh}{\texttt{Th}\xspace}
\newcommand{\Max}{\text{Max}\xspace}
\newcommand{\Min}{\text{Min}\xspace}
\newcommand{\Real}{\mathbb{R}}
\newcommand{\Nat}{\mathbb{N}}
\newcommand{\Rat}{\mathbb{Q}}
\newcommand{\zug}[1]{\langle #1  \rangle}
\newcommand{\set}[1]{\{ #1 \}}
\newcommand{\stam}[1]{}
\newcommand{\val}{\textsf{val}}
\newcommand{\play}{\textsf{play}}
\newcommand{\supp}{\text{supp}}
\newcommand{\payoff}{\textsf{payoff}}
\newcommand{\Genpot}[2]{\textsf{Pot}(#1,#2)}
\newcommand{\Admi}{\textsc{Adm}}
\title{\Large Bidding Graph Games with Partially-Observable Budgets\thanks{This research was supported in part by ISF grant no. 1679/21, by the ERC CoG 863818 (ForM-SMArt), and the European Union’s Horizon 2020 research and innovation programme under the Marie Skłodowska-Curie Grant Agreement No. 665385.}}
\author{Guy Avni\thanks{University of Haifa}
\and Ism\"ael Jecker\thanks{University of Warsaw} \and {\DJ}or{\dj}e \v{Z}ikeli\'c\thanks{Institute of Science and Technology Austria (ISTA)}}
\date{}
\begin{document}
\maketitle

\begin{abstract}
Two-player zero-sum {\em graph games} are a central model, which proceeds as follows. 
A token is placed on a vertex of a graph, and the two players move it to produce an infinite {\em play}, which determines the winner or payoff of the game. Traditionally, the players alternate turns in moving the token. In {\em bidding games}, however, the players have budgets and in each turn, an auction (bidding) determines which player moves the token. So far, bidding games have only been studied as full-information games. 
In this work we initiate the study of partial-information bidding games: we study bidding games in which a player's initial budget is drawn from a known probability distribution. 
We show that while for some bidding mechanisms and objectives, it is straightforward to adapt the results from the full-information setting to the partial-information setting, for others, the analysis is significantly more challenging, requires new techniques, and gives rise to interesting results. 
Specifically, we study games with {\em mean-payoff} objectives in combination with {\em poorman} bidding. We construct optimal strategies for a partially-informed player who plays against a fully-informed adversary. We show that, somewhat surprisingly, the {\em value} under pure strategies does not necessarily exist in such games. 
\end{abstract}

\section{Introduction}\label{sec:intro}
We consider two-player zero-sum {\em graph games}; a fundamental model with applications, e.g., in multi-agent systems \cite{AHK02}. 
A graph game is played on a finite directed graph as follows. A token is placed on a vertex and the players move it throughout the graph to produce an infinite path, which determines the payoff of the game. 
Traditional graph games are {\em turn-based}: the players alternate turns in moving the token. 

{\em Bidding games} \cite{LLPU96,LLPSU99} are graph games in which an ``auction'' (bidding) determines which player moves the token in each turn. The concrete bidding mechanisms that we consider proceed as follows. In each turn, both players simultaneously submit bids, where a bid is legal if it does not exceed the available budget. The higher bidder ``wins'' the bidding and moves the token. The mechanisms differ in their payment schemes, which are classified according to two orthogonal properties. {\em Who pays:} in {\em first-price} bidding only the higher bidder pays the bid and in {\em all-pay} bidding both players pay their bids. {\em Who is the recipient:} in {\em Richman} bidding (named after David Richman) payments are made to the other player and in {\em poorman} bidding payments are made to the ``bank'', i.e., the bid is lost. As a rule of thumb, bidding games under all-pay and poorman bidding are respectively technically more challenging than first-price and Richman bidding. More on this later. In terms of applications, however, we argue below that poorman bidding is often the more appropriate bidding mechanism.

\paragraph{Applications.}
A central application of graph games is {\em reactive synthesis}~\cite{PR89}: given a specification, the goal is to construct a {\em controller} that ensures correct behavior in an adversarial environment. Synthesis is solved by constructing a turn-based graph game in which \PO is associated with the controller and \PT with the environment, and searching for a winning \PO strategy. 


Bidding games extend the modeling capabilities of graph games. For example, they model ongoing and stateful auctions in which budgets do not contribute to the players' utilities. 
Advertising campaigns are one such setting: the goal is to maximize visibility using a pre-allocated advertising budget. 
By modeling this setting as a bidding game and solving for \PO, we obtain a bidding strategy with guarantees against any opponent\footnote{A worst-case modelling assumes that the other bidders cooperate against \PO.}. Maximizing visibility can be expressed as a mean-payoff objective (defined below).

All-pay poorman bidding is particularly appealing since it constitutes a dynamic version of the well-known Colonel Blotto games \cite{Bor21}. Rather than thinking of the budgets as money, we think of them as resources at the disposal of the players, like time or energy. Then, deciding how much to bid represents the effort that a player invests in a competition, e.g., investing time to prepare for a job interview, where the player that invests more wins the competition. 


\paragraph{Prior work -- full-information bidding games.}
The central quantity in bidding games is the {\em initial ratio} between the players' budgets. Formally,  for $i \in \set{1,2}$, let $B_i$ be \PLi's initial budget. Then, \PO's initial ratio is $B_1/(B_1 + B_2)$. 
A {\em random-turn game} \cite{PSSW09} with parameter $p \in [0,1]$ is similar to a bidding game only that instead of bidding, in each turn, we toss a coin with probability $p$ that determines which player moves the token. 
Formally, a random-turn game is a special case of a {\em stochastic game} \cite{Con92}.

{\em Qualitative objectives.} 
In {\em reachability} games, each player is associated with a target vertex, the game ends once a target is reached, 
and the winner is the player whose target is reached.
Reachability bidding games were studied in \cite{LLPU96,LLPSU99}. It was shown that, for first-price reachability games, a {\em threshold ratio} exists, which, informally, is a necessary and sufficient initial ratio for winning the game. 
Moreover, it was shown that first-price Richman-bidding games are equivalent to uniform random-turn games (and only Richman bidding); namely, the threshold ratio in a bidding game corresponds to the value of a uniform random-turn game. 
All-pay reachability games are technically more challenging. Optimal strategies might be mixed and may require sampling from infinite-support probability distributions even in extremely simple games \cite{AIT20}.


{\em Mean-payoff games.} 
Mean-payoff games are infinite-duration quantitative games. Technically, each vertex of the graph is assigned a weight, and the {\em payoff} of an infinite play is the long-run average sum of weights along the path. 
The payoff is \PO's reward and \PT's cost, thus we refer to them respectively as  \Max and \Min. For example, consider the ``bowtie'' game $\lolli$, depicted in Fig.~\ref{fig:bowtie}.  The payoff in $\lolli$ corresponds to the ratio of bidding that \Max wins. Informally $\lolli$ models the setting in which in each day a publisher sells an ad slot, and \Max's objective is to maximize visibility: the number of days that his ad is displayed throughout the year. Unlike reachability games, intricate equivalences between mean-payoff bidding games and random-turn games are known for all the mechanisms described above~\cite{AHC19,AHI18,AHZ21,AJZ21}.


\begin{example}
We illustrate the equivalences between full-information bidding games and random-turn games. 
Consider the ``bowtie'' game $\lolli$ (see Fig.~\ref{fig:bowtie}). For $p \in [0,1]$, the random-turn game $\RT(\lolli, p)$ that uses a coin with bias $p$ is depicted in Fig.~\ref{fig:RTBowtie}. Its expected payoff is $p$. 

Suppose that the initial ratio is $r \in (0,1)$. Under first-price Richman-bidding, the optimal payoff in $\lolli$ does not depend on the initial ratio: no matter what $r$ is, the optimal payoff that \Max can guarantee is arbitrarily close to $0.5$, hence the equivalance with $\RT(\lolli, 0.5)$. Under first-price poorman bidding, the optimal payoff {\em does} depend on the initial ratio: roughly, the optimal payoff that \Max can guarantee is $r$, hence the equivalence with $\RT(\lolli, r)$.
For all-pay bidding, pure strategies are only ``useful'' in all-pay poorman bidding and only when $r > 0.5$, where \Max can guarantee an optimal payoff of $\frac{2r -1}{r}$.  
The results extend to general strongly-connected games (see Thm.~\ref{thm:FI-MP}).\hfill$\triangle$
\end{example}

\stam{
\noindent
\begin{minipage}{0.7\linewidth}
\vspace{0.06cm}
We start with first-price bidding. Under Richman-bidding, the optimal payoff in $\lolli$ does not depend on the initial ratio. Roughly, \Max can guarantee, using a pure strategy, a payoff arbitrarily close to $0.5$ with any positive initial ratio, and cannot do better. Under first-price poorman bidding, the optimal payoff {\em does} depend on the initial ratio: roughly, when \Max's initial ratio is $r \in [0,1]$, the optimal payoff that he can guarantee is $r$.
\vspace{0.06cm}
\end{minipage} \ \hspace{0.2cm}
\begin{minipage}{0.3\linewidth}
\begin{center}
\includegraphics[height=1.3cm]{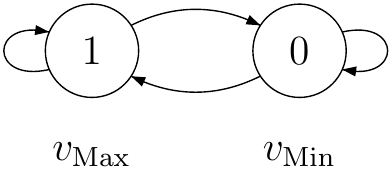}
\captionof{figure}{The mean-payoff game $\lolli$ with weights in the vertices.}
\label{fig:bowtie}
\end{center}
\end{minipage}

\noindent For all-pay bidding, pure strategies are only ``useful'' in all-pay poorman bidding and only when $r > 0.5$, where \Max can guarantee an optimal payoff of $\frac{2r -1}{r}$.  These optimal payoffs coincide with a value of a random-turn game (see App.~\ref{app:RT-MP}) and the results extend to general strongly-connected games (see Thm.~\ref{thm:FI-MP}).\hfill$\triangle$
\end{example}
}

 \begin{figure}
    \begin{minipage}[b]{0.45\linewidth}
    \centering
\includegraphics[height=1.3cm]{lollipop.pdf}
\caption{The mean-payoff game $\lolli$ with the weights in the vertices.}
\label{fig:bowtie}
\end{minipage}
\hspace{0.05\linewidth}
  \begin{minipage}[b]{0.45\linewidth}
    \centering
\includegraphics[height=1.3cm]{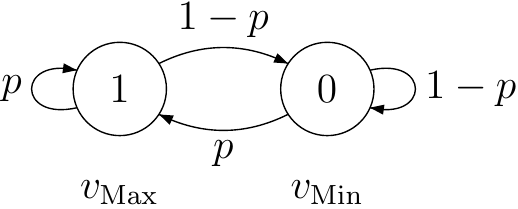}
\caption{The simplified random-turn game $\RT(\lolli, p)$, for $p \in [0,1]$.}
\label{fig:RTBowtie}
\end{minipage}
\end{figure}

\paragraph{Our contributions -- partial-information bidding games.}
In most auction domains, bidders are not precisely informed of their opponent's budget. Bidding games, however, have only been studied as full-information games. 
We initiate the study of bidding games in which the players are partially informed of the opponent's budget. Specifically, we study bidding games in which the two players' budgets are drawn from a known probability distribution, and the players' goal is to maximize their expected utility. 
We first show that the results on qualitative objectives as well as first-price Richman bidding transfer to the partial-information setting.

 
We turn to study mean-payoff poorman-bidding games, which are significantly more challenging. We focus on {\em one-sided} partial-information games in which only \PT's budget is drawn from a probability distribution. 
Thus, \PO is {\em partially informed} and \PT is {\em fully informed} of the opponent's budget.
We argue that one-sided partial-information games are practically well-motivated. Indeed, one-sided partial information is a  worst-case modelling: the utility that an optimal strategy for \PO guarantees in the game, is a lower bound on the utility that it will guarantee when deployed against the concrete environment. 
We illustrate our results in the following example. 

\begin{example}
\label{ex:partially-informed}
Consider the bowtie game $\lolli$ (Fig.~\ref{fig:bowtie}), where \Max (the partially-informed player) starts with a budget of $B$ and \Min (the fully-informed player) starts with a budget that is drawn uniformly at random from $\supp(\gamma) = \set{C_1,C_2}$. We describe an optimal strategy for \Max under first-price poorman bidding. \Max carefully chooses an $x \in [B \cdot \frac{C_1}{C_2},B]$ and divides his budget into two ``wallets''; the first with budget $x$ and the second with budget $B-x$. He initially uses his first wallet to play an optimal full-information strategy assuming the initial budgets are $x$ and $C_1$, which guarantees a payoff of at least $p_1 = \frac{x}{C_1 + x}$. If \PT spends more than $C_1$, i.e., her initial budget was in fact $C_2$, then \PO proceeds to use his second wallet against \PT's remaining budget, which guarantees a payoff of at least $p_2 = \frac{B-x}{B-x+C_2-C_1}$. Thus, the expected payoff is at least $0.5\cdot (p_1 + p_2)$, and \Max simply chooses an $x$ that maximizes this expression. Note that the constraint that $x \geq B \cdot \frac{C_1}{C_2}$ implies that $p_1 \geq p_2$, 
thus \Min has an incentive to play so that \Max proceeds to use his second wallet. We show that this strategy is optimal, and extend the technique to obtain optimal strategies in general strongly-connected games for first-price and all-pay poorman bidding. 

Finally, we show that the optimal payoff that \Min can guarantee in $\lolli$, is obtained by a surprisingly simple strategy. 
We show that the following \Min strategy is optimal: when her initial budget is $C_i$, for $i \in \set{1,2}$, \Min follows an optimal full-information strategy for ratio $B/(B+C_i)$. That is, she ``reveals'' her true budget in the first round and cannot gain utility by hiding this information. The technical challenge is to show that this strategy is optimal.
\hfill$\triangle$
\end{example}

Our results show that contrary to turn-based, stochastic games, and full-information bidding games, there is a gap between the optimal payoffs that the players can guarantee with pure strategies. Thus, the {\em value} does not necessary exist in partial-information mean-payoff bidding games under pure strategies.

\paragraph{Related work.}
The seminar book \cite{AMS95} studies the mean-payoff game $\lolli$ under one-sided partial-information with a different semantic to the one we study. Let $L$ or $R$ denote the two vertices of $\lolli$. \Min has partial information of the weights of $L$ and $R$, which, before the game begins, are drawn from a known probability distribution. \Max, the fully-informed player, knows the weights. In each turn, \Max chooses $L$ or $R$, followed by \Min who either ``accepts'' or ``rejects'' \Max's choice, thus both players can affect the movement of the token. The value in the game is shown to exist. Interestingly and similar in spirit to our results, there are cases in which \Max cannot use his knowledge advantage and his optimal strategy reveals which of the two vertices he prefers. One-sided partial information have also been considered in turn-based graph games, e.g., \cite{Reif84,RC+07,DDR06}.

{\em Discrete bidding games} were studied in \cite{DP10}; namely, budgets are given in coins, and the minimal positive bid a player can make is a single coin. Tie-breaking is a significant factor in such games~\cite{AAH21}. Non-zero-sum bidding games were studied in \cite{MKT18}. See also the survey~\cite{AH20}.

\section{Preliminaries}
\label{sec:prelim}
\paragraph{Strategies in bidding games.}
A bidding game is played on a directed graph $\zug{V, E}$. A {\em strategy} in any graph game is a function from histories to actions. In bidding games, a history consists of the sequence of vertices that were visited and bids made by the two players. We stress that the history does not contain the current state of the budgets. Rather, a player can compute his opponent's current budget based on the history of bids, if he knows her initial budget. We formalize the {\em available budget} following a history. For $i \in \set{1,2}$, suppose the initial budget of \PLi is $B_i$. For a history $h$, we define the {\em investments} of \PLi throughout $h$, denoted $\text{Inv}_i(h)$. In all-pay bidding, $\text{Inv}_i(h)$ is the sum bids made by \PLi throughout $h$, and in first-price bidding, it is the sum only over the winning bids. We denote by $B_i(h)$ \PLi's  available budget following $h$. Under Richman bidding, winning bids are paid to the opponent, thus $B_i(h) = B_i - \text{Inv}_i(h) + \text{Inv}_{3-i}(h)$. Under poorman bidding, winning bids are paid to the bank, thus $B_i(h) = B_i - \text{Inv}_i(h)$.

Given a history, a strategy prescribes an action, which in a bidding game, is a pair $\zug{b,u} \in \Real \times V$, where $b$ is a bid and $u$ is the vertex to move to upon winning. We restrict the actions of the players following a history $h$ so that (1) the bid does not exceed the available budget, thus following a history $h$, a legal bid for \PLi is a bid in $[0, B_i(h)]$, and (2) a player must choose a neighbor of the vertex that the token is placed on.  We restrict attention to strategies that choose legal actions for all histories. Note that we consider only {\em pure} strategies and disallow {\em mixed} strategies (strategies that allow a random choice of action).

\begin{definition}
For $i \in \set{1,2}$, we denote by $\S_i(B_i)$ the set of legal strategies for \PLi with an initial budget of $B_i$. Note that with a higher initial budget, there are more strategies to choose from, i.e., for $B'_i > B_i$, we have $\S_i(B_i) \subseteq \S_i(B'_i)$.
\end{definition}

The central quantity in bidding games is the initial ratio, defined as follows.
\begin{definition}
{\bf Budget ratio.}
When \PLi's budget is $B_i$, for $i \in \set{1,2}$, we say that \PLi's ratio is $\frac{B_i}{B_1 + B_2}$. 
\end{definition}

\paragraph{Plays.} Consider initial budgets $B_1$ and $B_2$ for the two players, two strategies $f \in \S_1(B_1)$ and $g \in \S_2(B_2)$, and an initial vertex $v$. The triple $f$, $g$, and $v$ gives rise to a unique {\em play}, denoted $\play(v, f,g)$. The construction of $\play(v, f, g)$ is inductive and is intuitively obtained by allowing the players to play according to $f$ and $g$. Initially, we place the token on $v$, thus the first history of the game is $h=v$. Suppose a history $h$ has been played. Then, the next action that the players choose is respectively $\zug{u_1, b_1} = f(h)$ and $\zug{u_2, b_2} = g(h)$. If $b_1 > b_2$, then \PO wins the bidding and the token moves to $u_1$, and otherwise \PT wins the bidding and the token moves to $u_2$. Note that we resolve ties arbitrarily in favor of \PT. The play continues indefinitely. Since the players always choose neighboring vertices, each play corresponds to an infinite path in $\zug{V, E}$. For $n \in \Nat$, we use $\play_n(v, f,g)$ to denote its finite prefix of length $n$. We sometimes omit the initial vertex from the play when it is clear from the context.

\paragraph{Objectives.}
We consider zero-sum games. An {\em objective} assigns a {\em payoff} to a play, which can be thought of as \PO's reward and \PT's penalty. We thus sometimes refer to \PO as \Max and \PT as \Min. We denote by $\payoff(f, g, v)$ the payoff of the play $\play(f, g, v)$. 

\paragraph{Qualitative objectives.} The payoff in games with qualitative objectives is in $\set{-1,1}$. We say that \PO\ {\em wins} the play when the payoff is $1$. We consider two qualitative objectives. (1) {\em Reachability.} There is a distinguished target vertex $t$ and a play is winning for \PO iff it visits $t$. (2) {\em Parity.} Each vertex is labeled by an index in $\set{1,\ldots,d}$ and a play is winning for \PO iff the highest index that is 
\stam{
\begin{itemize}
\item {\bf Reachability.} There is a distinguished target vertex $t$ and a play is winning for \PO iff it visits $t$. 
\item {\bf Parity.} Each vertex is labeled by an index in $\set{1,\ldots,d}$ and a play is winning for \PO iff the highest index that is visited infinitely often is odd. 
\end{itemize}
}
Parity objectives are important in practice, e.g., reactive synthesis~\cite{PR89} is reducted to the problem of solving a (turn-based) parity games. 

\paragraph{Mean-payoff games.} 
The quantitative objective that we consider is {\em mean-payoff}. Every vertex $v$ in a mean-payoff game has a weight $w(v)$ and the payoff of an infinite play is the long-run average weight that it traverses. Formally, the payoff of an infinite path $v_1, v_2, \ldots$ is $\liminf_{n \to \infty} \frac{1}{n} \sum_{1 \leq i < n} w(v_i)$. Note that the definition favors \Min since it uses $\liminf$. 

\paragraph{Values in full-information bidding games.}
We are interested in finding the optimal payoff that a player can {\em guarantee} with respect to an initial budget ratio. 
Let $c \in \Real$ and initial budgets $B_1$ and $B_2$. We say that \PO can {\em guarantee} a payoff of $c$, if he can reveal that he will be playing according to a strategy $f \in \S_1(B_1)$, and no matter which strategy $g \in \S_2(B_2)$ \PT responds with, we have $\payoff(f, g) \geq c$. {\em \PO's value} is the maximal $c$ that he can guarantee, and \PT's value is defined dually. Note that there might be a gap between the two players' values. When \PO's value coincides with \PT's value, we say that the {\em value} exists in the game.

\subsection{Partial information bidding games}
A partial-information bidding game is $\G = \zug{V, E, \alpha, \gamma_1, \gamma_2}$, where $\zug{V,E}$ is a directed graph, $\alpha$ is an objective as we elaborate later, and the {\em budget distribution} $\gamma_i$ is a probability distribution from which \PLi's initial budget is drawn, for $i \in \set{1,2}$. The {\em support} of a probability distribution $\gamma: \Rat \rightarrow [0,1]$ is $\supp(\gamma) = \set{x \in \Rat: \gamma(x) > 0}$.  We restrict attention to finite-support probability distributions. For $i \in \set{1,2}$, the probability that \PLi's initial budget is $B_i \in \supp(\gamma_i)$ is $\gamma_i(B_i)$. 


\begin{definition}
{\bf One-sided partial information.}
We say that a game has one-sided partial information when $|\supp(\gamma_1)|=1$ and $|\supp(\gamma_2)| > 1$. We then call \PO the {\em partially-informed player} and \PT the {\em fully-informed player}. 
\end{definition}

We turn to define values in partial-information games. The intuition is similar to the full-information case only that each player selects a collection of strategies, one for each possible initial budget, and we take the expectation over the payoffs that each pair of strategies achieves. The $\delta$ in the following definition allows us to avoid corner cases due to ties in biddings and the $\epsilon$ is crucial to obtain the results on full-information mean-payoff bidding games.

\begin{definition}{\bf (Values in partial-information bidding games).}
\label{def:values}
Consider a partial-information bidding game $\G =\zug{V, E, \alpha, \beta, \gamma}$. Suppose $\supp(\beta) = \set{B_1, \ldots, B_n}$ and $\supp(\gamma) = \set{C_1, \ldots, C_n}$. We define \PO's value, denoted $\val^\downarrow(\G, \beta, \gamma)$, and \PT's value, denoted $\val^\uparrow(\G, \beta, \gamma)$, is defined symmetrically. We define that $\val^\downarrow(\G, \beta, \gamma) = c \in \Real$ if for every $\delta,\epsilon > 0$, 
\begin{itemize}
\item There is a collection $\big(f_B \in \S_1(B + \delta) \big)_{B \in \supp(\beta)}$ of \PO strategies, such that for every collection $\big(g_C \in \S_2(C) \big)_{C \in \supp(\gamma)}$ of \PT strategies, we have $\sum_{B,C} \beta(B) \cdot \gamma(C) \cdot \payoff(f_B, g_C) \geq c - \epsilon$.

\item For every collection $\big(f_B \in \S_1(B) \big)_{B \in \supp(\beta)}$ of \PO strategies, there is a collection $\big(g_C \in \S_2(C + \delta) \big)_{C \in \supp(\gamma)}$ of \PT strategies such that $\sum_{B,C} \beta(B) \cdot \gamma(C) \cdot \payoff(f_B, g_C) \leq c + \epsilon$.
\end{itemize}

Note that $\val^\downarrow(\G, \beta, \gamma) \leq \val^\uparrow(\G, \beta, \gamma)$ and when there is equality, we say that the value exists, and denote it by $\val(\G, \beta, \gamma)$. 
\end{definition}

The value in mean-payoff games is often called the {\em mean-payoff value}. In mean-payoff games we use $\MP^\downarrow, \MP^\uparrow$, and $\MP$ instead of $\val^\downarrow, \val^\uparrow$, and $\val$, respectively. When $\G$ is full-information and the budget ratio is $r$, we use $\MP(\G, r)$ instead of writing the two budgets.

\stam{
\begin{remark}{\bf (Revealing the initial budget).}
We note that the definition of strategies in partial- and full-information bidding games is the same; namely, a strategy is a function from histories to actions. Recall that histories only contain the previous bids performed. Thus, while in full-information games this suffices to compute the current budget of the opponent, in partial-information games a player only knows a distribution of possible current budgets of the opponent. In other words, when \PO plays according to $f$ and \PT according to $g_1$ and $g_2$, then as long as $g_1$ and $g_2$ bid the same, the bids made by $f$ are the same and so is the generated history. On the other hand, the opponent {\em reveals} her true initial budget when following a history $h$, we have $g_1(h) \neq g_2(h)$. 
\end{remark}
}

\section{Partial-Information Qualitative First-Price Bidding Games}
\label{sec:qual}
In this section, we focus on first-price bidding and show that the value exists in partial-information bidding games with qualitative objectives. The proof adapts results from the full-information setting, which we survey first. 


\begin{definition}
{\bf (Threshold ratios in full-information games).}
Consider a full-information first-price bidding game with a qualitative objective. Suppose that the sum of initial budgets is $1$ and that the game starts at $v$. The threshold ratio in $v$, denoted $\thresh(v)$, is a value $t$ such that for every $\epsilon > 0$:
\begin{itemize}
\item \PO wins when his ratio is greater than $\thresh(v)$; namely, when the initial budgets are $t+\epsilon$ and $1-t-\epsilon$. 
\item \PT wins when \PO's ratio is less than $\thresh(v)$; namely, when the initial budgets are $t-\epsilon$ and $1-t+\epsilon$. 
\end{itemize}
\end{definition}

Existence of threshold ratios for full-information reachability games was shown in \cite{LLPU96,LLPSU99} and later extended to full-information parity games in \cite{AHC19,AHI18}.
\begin{theorem}\cite{LLPU96,LLPSU99,AHC19,AHI18}
Threshold ratios exist in every vertex of a parity game.
\end{theorem}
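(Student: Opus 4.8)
The plan is to prove the statement by reducing the infinite-duration parity objective to the finite-duration reachability objective, whose threshold ratios I would establish first as the base case.

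\textbf{Base case (reachability).} For a reachability game with \PO's target $t$ and \PT's target $s$, I would define a candidate threshold function $\tau: V \to [0,1]$ as a fixed point of the natural one-step operator: set $\tau(t)=0$, $\tau(s)=1$, and on every other vertex $v$ let $\tau(v)$ be determined by the thresholds of the \emph{best} neighbor for each player --- the neighbor $v^-$ minimizing $\tau$ (best for \PO) and the neighbor $v^+$ maximizing $\tau$ (best for \PT) --- via the recurrence appropriate to the payment rule (the ``Richman function'' $\tau(v)=\tfrac12(\tau(v^-)+\tau(v^+))$ for Richman payments, and the analogous poorman recurrence whose normalization reflects that winning bids are lost to the bank). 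Existence of such a fixed point I would obtain from the monotonicity and boundedness of the operator, either by Knaster--Tarski on $[0,1]^V$ or by showing that value iteration converges. The harder half is to certify that $\tau$ really is the threshold: for the forward direction I would have \PO, whenever his ratio exceeds $\tau(v)$, bid the normalized gap $\tau(v^+)-\tau(v)$ (suitably scaled by his budget) and move to $v^-$ upon winning, and then argue by an invariant that his ratio remains strictly above the current vertex's threshold after every bidding --- whoever wins, the token moves to a vertex whose threshold is no larger than what the preserved ratio can afford --- so that the token is driven to $t$. The backward direction, that \PT wins below threshold, is symmetric.

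\textbf{Reduction step (parity).} I would reduce the parity objective to reachability in two moves. First, I would treat strongly connected games, where I expect parity to be \emph{all-or-nothing}, i.e.\ $\thresh(v)\in\{0,1\}$ for every $v$, with the winner determined by the parity of the maximal color. The argument is a ``save-and-splurge'' strategy for the player who owns the top color: by bidding low for long stretches he accumulates (Richman) or conserves (poorman) budget, and since the graph is strongly connected he then needs to win only a bounded number of consecutive biddings to steer the token along a shortest path to a top-color vertex; iterating this forces that vertex to be visited infinitely often with \emph{any} positive initial budget, while the opponent cannot sustain the bidding war needed to avoid it forever. Second, for a general parity game I would contract each strongly connected component to a vertex labelled ``won by \PO'' or ``won by \PT'' according to its induced subgame, and observe that the parity threshold at $v$ coincides with the reachability threshold (in this contracted game) of reaching the set of \PO-winning components. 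Invoking the reachability result then yields existence of $\thresh(v)$ everywhere; this can be organized as a Zielonka-style recursion on the number of colors.

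\textbf{Main obstacle.} I expect the crux to be the strongly connected case: making ``any positive budget suffices'' genuine requires bounding the number of biddings the opponent can win before being forced to yield, and showing that the budget accumulated between splurges outpaces the cost of each splurge, uniformly in the current ratio. A secondary source of friction is that the two payment conventions (Richman versus poorman) induce different threshold recurrences and different budget-bookkeeping invariants, so both the invariant used in the forward reachability strategy and the accounting in the save-and-splurge argument must be instantiated separately for the two mechanisms.
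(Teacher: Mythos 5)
First, note that the paper does not prove this statement: it is imported wholesale from the cited works (reachability thresholds from \cite{LLPU96,LLPSU99}, the lift to parity from \cite{AHC19} for Richman and \cite{AHI18} for poorman), so the comparison here is with those proofs rather than with anything in this paper. At the level of architecture your plan matches them: a fixed-point recurrence plus an invariant-preserving bidding strategy for reachability, an ``all-or-nothing'' lemma for strongly-connected parity games, and a reduction of general parity games to reachability. For the strongly-connected Richman case there is in fact a cleaner route than save-and-splurge: the averaging recurrence forces the set of vertices maximizing $\thresh$ to be successor-closed, so in a strongly-connected game the reachability threshold to \emph{any} single vertex is $0$; the player favoured by the top priority then just iterates a reachability strategy to visit that vertex infinitely often.

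There are two genuine gaps. (1) The SCC step is not sound as stated. You cannot contract each SCC to a vertex ``labelled by the winner of its induced subgame'' and then solve reachability on the condensation: the induced subgame of a non-bottom SCC has no well-defined winner in isolation, because the opponent can exit the SCC the moment she wins a single bidding, and the budget spent manoeuvring inside an SCC is not recoverable, so the condensation does not preserve thresholds. The correct reduction keeps the original graph, changes only the objective to reachability of a target set, and computes that target set by a bottom-up induction over the condensation DAG (exiting an SCC strictly descends in the DAG, so it happens finitely often); general parity games then have genuinely fractional thresholds, not values in $\{0,1\}$. (2) The poorman half is where the real difficulty lies and your sketch does not survive there: under poorman bidding a losing bidder pays nothing and \emph{earns} nothing, so the ``accumulate budget by losing'' accounting that powers save-and-splurge is unavailable; the argument must instead charge the opponent for every bidding she wins and show her budget is depleted before she can deflect the token forever. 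Likewise the poorman reachability recurrence $\thresh(v)=\thresh(v^+)/\bigl(1+\thresh(v^+)-\thresh(v^-)\bigr)$ admits irrational solutions, so certifying that the fixed point you construct is the threshold (and that the invariant strategy is legal, since winning bids are lost rather than transferred) requires a separate analysis. You flag this asymmetry as ``friction,'' but it is the core content of \cite{AHI18} and is missing from the proposal.
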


The following theorem, 
extends these results to the partial-information setting. 


\begin{theorem}
\label{thm:partial-qual}
Consider a partial-information parity first-price bidding game $\G = \zug{V, E, \alpha, \beta, \gamma}$ and a vertex $v \in V$. Let $W = \set{\zug{B,C}: B \in \supp(\beta), \ C \in \supp(\gamma), \text{ and } \thresh(v) < \frac{B}{B+C}}$. Then, the value of $\G$ in $v$ is $\sum_{\zug{B,C} \in W} \beta(B) \cdot \gamma(C)$. 
\end{theorem}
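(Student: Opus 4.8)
The plan is to reduce the partial-information parity game to the full-information threshold-ratio results, exploiting the fact that the support sets are finite and the threshold ratio $\thresh(v)$ is a single fixed number determined only by the game graph and objective, not by the budgets. Write $\supp(\beta)=\set{B_1,\ldots,B_n}$ and $\supp(\gamma)=\set{C_1,\ldots,C_m}$, and let $q = \sum_{\zug{B,C}\in W}\beta(B)\cdot\gamma(C)$ be the claimed value. I would prove $\val^\downarrow(\G,\beta,\gamma)\ge q$ and $\val^\uparrow(\G,\beta,\gamma)\le q$ separately, matching the two bullets of Definition~\ref{def:values}. Since the payoff in a qualitative game lies in $\set{-1,1}$, it is cleaner to first argue about the probability $p$ that \PO wins and then translate: $\payoff = 2p-1$, so guaranteeing a win probability of $q$ corresponds to an expected payoff of $2q-1$; I would phrase the final answer in whichever normalization the paper intends (the statement reads as if ``value'' here is the winning probability $q$, so I would confirm that convention and argue $p=q$).

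For the lower bound (\PO's value), fix $\delta,\epsilon>0$. For each pair $\zug{B,C}\in W$ we have $\thresh(v)<\frac{B}{B+C}$; normalizing the sum of budgets to $1$, \PO's ratio strictly exceeds the threshold, so by the full-information threshold theorem \PO has a winning strategy from budget slightly above threshold. I would let \PO commit, for each possible initial budget $B\in\supp(\beta)$, to a single strategy $f_B\in\S_1(B+\delta)$ constructed so that against the fully-informed opponent's budget $C$ it plays the full-information winning strategy whenever $\zug{B,C}\in W$. The subtlety is that $f_B$ cannot depend on $C$, since \PO does not observe \PT's budget; I would resolve this by having $f_B$ play the winning strategy tuned to the \emph{smallest} $C$ with $\zug{B,C}\in W$ (equivalently the largest ratio threshold that still lies in $W$), using that a strategy winning against a smaller opponent budget a fortiori wins against any larger opponent budget once \PO's ratio is above threshold — this monotonicity is where the $\S_i(B_i)\subseteq\S_i(B_i')$ remark and the $\delta$-slack are used. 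Then for every pair in $W$ the payoff is $+1$, and for pairs outside $W$ it is at least $-1$, so the expected payoff is at least $\sum_{\zug{B,C}\in W}\beta(B)\gamma(C)\cdot 1 + \sum_{\text{else}}(-1) = 2q-1$.

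For the upper bound (\PT's value), I would argue symmetrically: \PT is fully informed, so she may choose a \emph{different} strategy $g_C$ for each of her possible budgets $C$. For each fixed \PO strategy collection $(f_B)$ and each pair $\zug{B,C}\notin W$, we have $\frac{B}{B+C}\le\thresh(v)$, so \PO's ratio is at or below threshold and by the threshold theorem \PT has a winning response (using the $\delta$-slack to push strictly below threshold on the boundary case $\frac{B}{B+C}=\thresh(v)$). Because \PT sees her own $C$, she can deploy this winning $g_C$ against whatever $f_B$ \PO uses, yielding payoff $-1$ on every pair outside $W$ and at most $+1$ on pairs inside $W$, giving expected payoff at most $2q-1$. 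Matching the two bounds gives $\val(\G,\beta,\gamma)=2q-1$ (or $q$ in the winning-probability normalization), so the value exists and equals the claimed expression.

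The main obstacle is the asymmetry of information in the lower bound: \PO must fix one strategy $f_B$ per budget $B$ that simultaneously handles all opponent budgets $C$, whereas \PT gets to react to each $C$ individually. Making the monotonicity argument precise — that a single full-information winning strategy calibrated to one ratio continues to win against all larger opponent budgets, and that committing to budget $B+\delta$ leaves enough slack to cover the open threshold inequality — is the real content; everything else is bookkeeping over the finite product $\supp(\beta)\times\supp(\gamma)$. I would also double-check the boundary pairs where $\frac{B}{B+C}=\thresh(v)$ exactly, which belong to $W$'s complement and must be won by \PT; the $\delta$ in Definition~\ref{def:values} is designed precisely to give \PT (resp.\ deny \PO) the strict inequality needed there.
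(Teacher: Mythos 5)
Your overall architecture matches the paper's: reduce to the full-information threshold theorem, have each player commit to one strategy per own budget, and check each pair $\zug{B,C}$ separately. But the key step of your lower bound is inverted, and the justification you give for it is false. For a fixed $B$, the set of $C$ with $\zug{B,C}\in W$ is downward closed (since $\frac{B}{B+C}$ decreases in $C$), and you propose to calibrate $f_B$ to the \emph{smallest} such $C$, claiming that ``a strategy winning against a smaller opponent budget a fortiori wins against any larger opponent budget.'' The monotonicity goes the other way: by the paper's own observation $\S_2(C')\subseteq\S_2(C)$ for $C'<C$, an opponent with a \emph{larger} budget has strictly more strategies available, so a strategy that defeats everything in $\S_2(C')$ may well lose to some strategy in $\S_2(C)\setminus\S_2(C')$. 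The correct choice (and the one the paper makes) is to calibrate $f_B$ to the \emph{largest} $C$ with $\zug{B,C}\in W$; that strategy then automatically wins against every smaller $C'$ because the opponent's strategy set only shrinks. As written, your $f_B$ handles only the single easiest pair and gives no guarantee on the other pairs in $W$, so the claimed bound $2q-1$ does not follow.

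A secondary gap: in your upper bound you let \PT ``deploy this winning $g_C$ against whatever $f_B$ \PO uses,'' treating her as if she observed $B$. In this theorem both budgets are drawn from distributions, so \PT's collection $(g_C)_{C\in\supp(\gamma)}$ may depend only on her own budget $C$, not on $B$. The fix is the same (correctly oriented) monotonicity: for each $C$, calibrate $g_C$ to the maximal $B\in\supp(\beta)$ with $\frac{B}{B+C}\le\thresh(v)$; since $\S_1(B')\subseteq\S_1(B)$ for $B'<B$, this $g_C$ wins against every smaller $B'$ as well, covering all pairs outside $W$ (with the tie-breaking convention handling the boundary case $\frac{B}{B+C}=\thresh(v)$). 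Your attention to the $\pm1$ versus winning-probability normalization and to the boundary pairs is fine; the single substantive error is the direction of the budget-monotonicity argument.
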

\begin{proof}
Consider the following collection of strategies for \PO. For every $B \in \supp(\beta)$, let $C \in \supp(\gamma)$ be the maximal initial budget such that \PO wins with initial budgets $B$ and $C$ from $v$. That is, $C$ is the maximal element such that $\frac{B}{B+C} > \thresh(v)$. We fix \PO's strategy for initial budget $B$ to be a winning strategy $f$ against $C$. It is not hard to show that $f$ wins against any \PT strategy $g \in \S_2(C')$, for $C' < C$. 

To show that \Max cannot guarantee a higher payoff, we consider the dual collection of strategies for \Min: for every $C \in \supp(\gamma)$, \Min selects the maximal $B \in \supp(\beta)$ such that $\frac{B}{B+C} \leq \thresh(v)$, and plays according to a winning strategy for these budgets. Recall that we let \Min win bidding ties, thus she wins the game when $\frac{B}{B+C} = \thresh(v)$. Similar to the above, \Min wins for initial budgets $C$ and $B' < B$. 

To conclude, for each pair $\zug{B, C} \in \supp(\beta)\times \supp(\gamma)$, if $\frac{B}{B+C} > \thresh(v)$, \PO wins, and if $\frac{B}{B+C} \leq \thresh(v)$, \PT wins. Both players play irrespective of the opponent's strategy, hence the theorem follows. 
\end{proof}

\section{Partial-Information Mean-Payoff Bidding Games}
\label{sec:MP}
In this section we study mean-payoff bidding games. Throughout this section we focus on games played on {\em strongly-connected graphs}. We start by surveying results on full-information games. The most technically-challenging results concern one-sided partial-information poorman-bidding games. We first develop optimal strategies for the partially-informed player, and then show that the value does not necessary exist under pure strategies. 


\subsection{Full-information mean-payoff bidding games}\label{sec:MP-FI}
We show equivalences between bidding games and a class of stochastic games~\cite{Con92} called random-turn games, which are define formally as follows. 

\begin{definition}
{\bf (Random-turn games).} 
\label{def:RT}
Consider a strongly-connected mean-payoff bidding game $\G$. For $p \in [0,1]$, the random-turn game that corresponds to $\G$ w.r.t. $p$, denoted $\RT(\G, p)$, is a game in which instead of bidding, in each turn, we toss a (biased) coin to determine which player moves the token: \PO and \PT are respectively chosen with probability $p$ and $1-p$. Formally, $\RT(\G, p)$ is constructed as follows. Every vertex $v$ in $\G$, is replaced by three vertices $v_N, v_1$, and $v_2$. The vertex $v_N$ simulates the coin toss: it has an outgoing edge with probability $p$ to $v_1$ and an edge with probability $1-p$ to $v_2$. For $i \in \set{1,2}$, vertex $v_i$ simulates \PLi winning the coin toss: it is controlled by \PLi and has an outgoing edge to $u_N$, for every neighbor $u$ of $v$. The weights of $v_N, v_1$, and $v_2$ coincide with the weight of $v$. The mean-payoff value of $\RT(\G, p)$, denoted $\MP\big(\RT(\G, p)\big)$, is the optimal expected payoff that the two players can guarantee, and it is known to exist \cite{Put05}. Since $\G$ is strongly-connected, $\MP\big(\RT(\G, p)\big)$ does not depend on the initial vertex.
\end{definition}

For a full-information game $\G$ and a ratio $r \in (0,1)$, recall that $\MP(\G, r)$ denotes the optimal payoff that \Max can guarantee with initial ratio $r$. 
We state the equivalences between the two models.

\begin{theorem}
\label{thm:FI-MP}
Let $\G$ be a strongly-connected full-information mean-payoff bidding game. 
\begin{itemize}
\item {\bf First-price Richman bidding \cite{AHC19}.} The optimal payoff that \Max can guarantee with a pure strategy does not depend on the initial ratio: for every initial ratio $r$, we have $\MP(\G, r) = \MP\big(\RT(\G, 0.5)\big)$. 
\item {\bf First-price poorman bidding \cite{AHI18}.} The optimal payoff that \Max can guarantee with pure strategy and ratio $r$ coincides with the value of a random-turn game with bias $r$: for every initial ratio $r$, we have $\MP(\G, r) = \MP\big(\RT(\G, r)\big)$. 
\item {\bf All-pay poorman bidding \cite{AJZ21}.} The optimal payoff that \Max can guarantee with a pure strategy and ratio $r > 0.5$ coincides with the value of a random-turn game with bias $(2r-1)/r$: for every initial ratio $r > 0.5$, we have $\MP(\G, r) = \MP\big(\RT(\G, (2r-1)/r)\big)$. 
\end{itemize}
\end{theorem}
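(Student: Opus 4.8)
The plan is to prove all three equivalences through a single potential-based template, instantiated with the payment scheme appropriate to each mechanism. Since $\G$ is strongly-connected, the stochastic game $\RT(\G,p)$ has a mean-payoff value $\MP(\RT(\G,p))$ that is independent of the start vertex and is attained by optimal positional strategies; by the gain-bias theory of mean-payoff stochastic games this value comes together with a \emph{potential} (bias) function $\Pot : V \to \Real$ satisfying, at every vertex $v$ of weight $w(v)$,
\[
\Pot(v) + \MP(\RT(\G,p)) \;=\; w(v) + p\cdot \max_{u} \Pot(u) + (1-p)\cdot \min_{u} \Pot(u),
\]
where $u$ ranges over the neighbors of $v$. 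The quantity I would extract from this is the \emph{strength} $\St(v) = \max_{u} \Pot(u) - \min_{u} \Pot(u)$ of each vertex, i.e. the gap between the best and worst successor potentials, which measures how valuable it is to win the bidding at $v$.

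For the lower bound (that \Max can guarantee at least $\MP(\RT(\G,p)) - \epsilon$), I would have \Max bid an amount proportional to $\St(v)$ and, upon winning, move to the potential-maximizing neighbor. The heart of the argument is an invariant $\Phi$ coupling \Max's current budget (or budget ratio) with $\Pot$ at the current vertex: the bids are normalized so that the budget is always large enough to sustain the bidding prescribed by the potentials, and $\Phi$ is restored regardless of whether \Max wins or loses a given bidding. Maintaining $\Phi$ forces the accumulated weight over $n$ steps to track $\Pot(v_0) - \Pot(v_n) + n\cdot \MP(\RT(\G,p))$ up to a bounded error, so dividing by $n$ and taking $\liminf$ yields a payoff of at least $\MP(\RT(\G,p))$; the $\epsilon$ slack in Definition~\ref{def:values} absorbs the bounded fluctuation and the $\delta$ absorbs ties. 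The upper bound is the exact dual: \Min bids proportionally to $\St(v)$, moves to the potential-minimizing neighbor on a win, and maintains the mirror invariant, giving a payoff of at most $\MP(\RT(\G,p)) + \epsilon$. Together the two bounds give $\MP(\G,r) = \MP(\RT(\G,p))$.

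The three items differ only in the value of $p$ and in the precise normalization of $\Phi$, both dictated by the payment rule. Under Richman bidding the winning bid is transferred to the opponent, so the \emph{ratio} of budgets is the conserved quantity, and a fair coin $p = 0.5$ is exactly what this conservation law supports, independently of $r$. Under poorman bidding the winning bid is burned, which breaks ratio-conservation and makes the affordable winning rate equal to the current budget ratio itself, giving $p = r$. All-pay poorman bidding is the delicate case: both players pay every round, so both budgets deplete even on lost biddings, and one must show that the rate at which \Max can repeatedly out-bid \Min equals $(2r-1)/r$, which also explains why pure strategies help only when $r > 0.5$ (below that threshold the loser's forced payments overwhelm any advantage).

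I expect the main obstacle to be the construction and invariant-maintenance of $\Phi$ in the all-pay poorman case. Unlike first-price, where only the winner pays and the loser's budget is untouched, here every bidding drains both budgets, so $\Phi$ must simultaneously control \Max's solvency and the worst-case rate at which \Min can force payments; deriving the precise constant $(2r-1)/r$ and proving that the invariant survives an adversarial \Min is where the real work lies. The remainder is a routine limiting argument: translating a drift/telescoping inequality on $\Pot$ into a $\liminf$ bound on the mean payoff while keeping the accumulated discretization error $o(n)$, which follows once the invariant is in place.
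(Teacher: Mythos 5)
First, note that the paper does not prove this theorem at all: it is a survey statement importing results from the cited prior work (\cite{AHC19,AHI18,AJZ21}), so there is no in-paper proof to compare against. Your sketch does correctly identify the technique those papers use --- extract a potential $\Pot$ and strength $\St(v)=\max_u\Pot(u)-\min_u\Pot(u)$ from the Bellman equation of $\RT(\G,p)$, have \Max bid proportionally to $\St(v)$ and move to the potential-maximizing successor, and couple the budget to the potential via an invariant whose telescoping yields the $\liminf$ bound. At the level of architecture this is faithful to the literature.

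As a proof, however, it has a genuine gap: everything you label ``the heart of the argument'' is asserted rather than carried out. You never construct $\Phi$, never specify the normalization of the bids, and never verify that $\Phi$ survives both a win and a loss --- and that verification is the entire content of the theorem; it is also exactly where the three mechanisms genuinely diverge (for first-price poorman the invariant must degrade gracefully as the total budget shrinks, since nothing is conserved; for all-pay poorman one must additionally control \Max's solvency against an adversary who can force him to pay on losing rounds, which is where the constant $(2r-1)/r$ and the threshold $r>0.5$ actually come from --- you flag this but do not do it). There is also a factual slip in the one place you do argue concretely: under Richman bidding the \emph{sum} of the budgets is conserved (the winner pays the loser), not the ratio; if the ratio were conserved the game would be trivial. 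The correct intuition for $p=0.5$ is the symmetry of the exchange --- whatever \Max pays, \Min gains, so the normalized bidding scheme lets \Max win roughly half of the strength-weighted biddings regardless of $r$. In short: right blueprint, but the load-bearing lemma (existence and maintenance of the invariant in each payment scheme) is missing, so this cannot stand as a proof of the statement.
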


Since the optimal payoff under first-price Richman bidding depends only on the structure of the game and not on the initial ratios, the result easily generalizes to partial-information games. Consider two budget distributions $\beta$ and $\gamma$ for \Min and \Max, respectively. Indeed, when \Min's initial budget is $B \in \supp(\beta)$, playing optimally against any $C \in \supp(\gamma)$ results in the same payoff, and similarly for \Max. We thus conclude the following. 

\begin{theorem}
Consider a strongly-connected first-price Richman mean-payoff bidding game $\G$.
For any two budget distributions $\beta$ and $\gamma$ for the two players, we have $\MP^\downarrow(\G, \beta, \gamma) = \MP^\uparrow(\G, \beta, \gamma) = \MP\big(\RT(\G, 0.5)\big)$. 
\end{theorem}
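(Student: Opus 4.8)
The plan is to invoke Theorem~\ref{thm:FI-MP} for first-price Richman bidding as a black box and show that the partial-information value collapses to the full-information payoff $\MP\big(\RT(\G, 0.5)\big)$ because, under Richman bidding, this payoff is \emph{independent of the initial ratio}. The key structural fact is that the guaranteeable payoff does not depend on which element of the support is realized, so the randomness in the budget distributions becomes irrelevant: every summand in the expectation $\sum_{B,C}\beta(B)\cdot\gamma(C)\cdot\payoff(f_B,g_C)$ can be driven to the same common value.

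Concretely, I would first establish $\MP^\downarrow(\G,\beta,\gamma) \geq \MP\big(\RT(\G,0.5)\big)$ by exhibiting a good collection of \Max strategies. Fix $\delta,\epsilon>0$. By Theorem~\ref{thm:FI-MP}, for each $B\in\supp(\beta)$ and each $C\in\supp(\gamma)$, \Max has a pure strategy with initial budget $B+\delta$ (against an opponent of budget $C$) guaranteeing a payoff of at least $\MP\big(\RT(\G,0.5)\big)-\epsilon$. The point of the Richman independence is that this bound is the \emph{same} for every ratio $(B+\delta)/(B+\delta+C)$, so we may pick a single strategy $f_B\in\S_1(B+\delta)$ that works simultaneously well against all $C$ in the (finite) support. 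Then for any \Min response collection $(g_C)_{C}$, each term $\payoff(f_B,g_C)\geq \MP\big(\RT(\G,0.5)\big)-\epsilon$, and since the $\beta(B)\cdot\gamma(C)$ are a probability distribution, the weighted sum is at least $\MP\big(\RT(\G,0.5)\big)-\epsilon$. Dually, using the \Min half of Theorem~\ref{thm:FI-MP} (the optimal payoff \Min can guarantee is also ratio-independent), I obtain $\MP^\uparrow(\G,\beta,\gamma)\leq \MP\big(\RT(\G,0.5)\big)+\epsilon$ via a symmetric collection $(g_C\in\S_2(C+\delta))_C$. Combined with the general inequality $\MP^\downarrow\leq\MP^\uparrow$ from Definition~\ref{def:values}, both values equal $\MP\big(\RT(\G,0.5)\big)$.

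The one subtlety worth checking is that a single strategy $f_B$ can be chosen to guarantee the full-information payoff against \emph{every} $C\in\supp(\gamma)$ at once, rather than needing a different strategy per opponent budget. This is exactly where Richman bidding differs from poorman: the guaranteed payoff is a fixed constant regardless of the opponent's budget, so the full-information strategy witnessing $\MP(\G,r)$ can be taken uniformly (it need not even observe $C$, since it guarantees the bound against any strategy in $\S_2(C)$ for any $C$). I expect this uniformity to be the main thing to argue carefully, but it follows directly from the statement of Theorem~\ref{thm:FI-MP}, which asserts the bound for \emph{every} initial ratio $r$. The remaining steps are bookkeeping: verifying that the $\delta$-inflated budget sets $\S_1(B+\delta)\supseteq\S_1(B)$ give legal strategies and that taking expectations over a probability distribution preserves the pointwise inequalities.
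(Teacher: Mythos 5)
Your proposal is correct and follows essentially the same route as the paper, which proves this theorem only via the short paragraph preceding it: since the first-price Richman payoff in Theorem~\ref{thm:FI-MP} is independent of the initial ratio, each player plays a full-information optimal strategy and every term in the expectation equals the same constant. Your extra care about choosing a single $f_B$ uniformly over the finite support (in effect, the strategy tuned to the largest $C\in\supp(\gamma)$) is a reasonable elaboration of the detail the paper leaves implicit.
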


\begin{remark}
{\bf (All-pay Richman bidding).}
It was shown in \cite{AJZ21} that in all-pay Richman bidding games, pure strategies are ``useless'': no matter what the initial ratio is, \Max cannot guarantee a positive payoff with a pure strategy. The study of mean-payoff all-pay Richman-bidding games is thus trivial in the partial-information setting as well. 
\end{remark}

\subsection{The value of the partially-informed player}
\label{sec:partially-informed}
We turn to study partial-information mean-payoff bidding games under poorman bidding, where we focus on one-sided partial information. We arbitrarily set \Max to be partially-informed and \Min to be fully-informed. 

\subsubsection{First-price bidding.}
Fix a strongly-connected mean-payoff game $\G$. Suppose that \Max's budget is $B$ and \Min's budget is chosen from a finite probability distribution $\gamma$ with $\supp(\gamma) = \set{C_1, \ldots, C_n}$ and $C_i < C_{i+1}$, for $1 \leq i < n$. We generalize the technique that is illustrated in Example~\ref{ex:partially-informed}. \Max carefully chooses increasing $x_1, \ldots, x_n$, where $x_n = B$. He maintains two ``accounts'': a spending account from which he bids and a savings account. Initially, the spending account has a budget of $x_1$ and the savings account, a budget of $B-x_1$. \Max plays ``optimistically''. He first plays in hope that \Min's budget is $C_1$ with a budget of $x_1$. If \Min does not spend $C_1$, the payoff is as in full-information games, namely at least $p_1 = \MP\big(\RT(\G, \frac{x_1}{x_1 + C_1})\big)$. Otherwise, \Min spends at least $C_1$ and \Max transfers budget from his savings account to his spending account so that the saving account has $B-x_2$ and the spending account has at least $x_2 - x_1$. Note that if \Min's initial budget was indeed $C_2$, at this point she is left with a budget of at most $C_2 - C_1$. If \Min does not spend $C_2 - C_1$, by following a full-information optimal strategy, \Max can guarantee a payoff of at least $p_2 = \MP\big(\RT(\G, \frac{x_2-x_1}{x_2-x_1 + C_2 - C_1})\big)$. The definition of $p_3,\ldots, p_n$ is similar. \Max chooses $x_1,\ldots, x_n$ so that $p_1 \geq \ldots \geq p_n$. Thus, when \Min's initial budget is $C_i$, she has an incentive to play so that \Max's spending account will reach $x_i$ and the payoff will be at least $p_i$. We call such a choice of $x_1,\ldots,x_n$ {\em admissible} and formally define it as follows. 

\begin{definition}\label{def:admissible}{\bf Admissible sequences.} 
Let $\G$ be a poorman mean-payoff bidding game. Let $B$ be a budget of \Max and $\gamma$ be a finite budget distribution of \Min
with $\supp(\gamma) =\{C_1,\dots,C_n\}$.
A sequence $(x_i)_{1 \leq i \leq n}$ of budgets is called
\emph{admissible} with respect to $B$ and $\gamma$ if 
$0 \leq x_1 \leq x_2 \leq \dots \leq x_n = B$ and
$p_1 \geq p_2 \geq \ldots \geq p_n$,~where
\begin{equation}\label{eq:xprelation}
    p_i = \MP\Big(\RT\Big(\G,\frac{x_i-x_{i-1}}{x_i-x_{i-1} + C_i - C_{i-1}}\Big)\Big)
\end{equation}
for each $1\leq i\leq n$, with $x_0=0$ and $C_0=0$.
We denote by $\Admi(B,\gamma)$ the set of all admissible sequences with respect to $B$ and $\gamma$.
\end{definition}

The main result of this section is stated in the following theorem. The upper bound is proven in Lemma~\ref{lem:partially-upper-bound} and the lower bound in Lemma~\ref{lem:partially-lower-bound}.

\begin{theorem}[Mean-payoff value of the partially-informed player]\label{thm:maxmeanpayoff}
Consider a strongly-connected first-price poorman mean-payoff bidding game $\G$. Let $B$ be the initial budget of \Max and $\gamma$ be a finite budget distribution of \Min with $\supp(\gamma) =\{C_1,\dots,C_n\}$. Then
\begin{equation}\label{eq:value}
\MP^\downarrow(\G, \beta, \gamma) = \max_{(x_i)_{1 \leq i \leq n} \in \Admi(B,\gamma)}\,   \mathsf{Val}(x_1,\dots,x_n),
\end{equation}
where  
\begin{equation}\label{eq:sum}
    \mathsf{Val}(x_1,\dots,x_n) = \sum_{i=1}^n   \gamma(C_i) \cdot \MP\Big( \RT\Big( \G, \frac{x_i-x_{i-1}}{x_i-x_{i-1} + C_i - C_{i-1}} \Big) \Big)
\end{equation}
with $x_0 = 0$ and $C_0 = 0$.
\end{theorem}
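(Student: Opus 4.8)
The plan is to prove the two inequalities separately, as announced: Lemma~\ref{lem:partially-lower-bound} gives $\MP^\downarrow(\G,\beta,\gamma) \ge \max_{(x_i) \in \Admi(B,\gamma)} \mathsf{Val}(x_1,\dots,x_n)$ and Lemma~\ref{lem:partially-upper-bound} the matching upper bound. Throughout I would lean on the full-information equivalence of Theorem~\ref{thm:FI-MP} (first-price poorman) as a black box, and exploit that mean-payoff is a $\liminf$ and hence insensitive to finite prefixes.

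For the lower bound I would fix an admissible sequence $(x_i)$ attaining the maximum and have \Max play the two-account strategy sketched before Definition~\ref{def:admissible}, formalized into $n$ \emph{stages}. In stage $k$ the spending account holds at least $x_k - x_{k-1}$ and \Max follows a full-information optimal strategy for ratio $\frac{x_k-x_{k-1}}{(x_k-x_{k-1})+(C_k-C_{k-1})}$, advancing to stage $k+1$ the moment \Min's cumulative investment (computable from the bidding history) exceeds $C_k$ and transferring $x_{k+1}-x_k$ from savings to spending. The bookkeeping closes because the total transferred is $x_n - x_1 = B - x_1$, exactly the initial savings. The key observation is that against \Min with budget $C_i$ the play settles in some final stage $j \le i$ (she cannot invest more than $C_i$, hence cannot trigger stage $i+1$), and while in a final stage $j$ her investment there is at most $C_j - C_{j-1}$, so the full-information guarantee yields a tail payoff of at least $p_j$; admissibility $p_1 \ge \dots \ge p_n$ then gives $p_j \ge p_i$. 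Since the $\liminf$ discards the finite prefix spent in earlier stages, \Max secures payoff $\ge p_i$ against $C_i$, and the expected payoff is $\ge \sum_i \gamma(C_i) p_i = \mathsf{Val}(x)$. The $\delta$ in the budget and the $\epsilon$ in Definition~\ref{def:values} absorb the overshoot of \Min's triggering bid and the asymptotic nature of the full-information guarantee.

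For the upper bound I would fix an arbitrary \Max strategy $f$ and build \Min's responses together with an admissible sequence extracted from $f$. The main device is a family of nested \Min ``pressure'' strategies $g_1, \dots, g_n$ that agree on a common prefix and successively spend more: letting $x_i$ be \Max's cumulative investment at the history where \Min has invested $\approx C_{i-1}$ and reveals a budget $\ge C_i$, monotonicity of investments in poorman bidding makes $0 \le x_1 \le \dots \le x_n \le B$ well defined (padding $x_n$ up to $B$ only helps \Max). On the tail after this point \Min has $\approx C_i - C_{i-1}$ left against \Max's $\approx x_i - x_{i-1}$, so the full-information \Min upper bound (the dual half of Theorem~\ref{thm:FI-MP}) holds \Max to at most $p_i = \MP(\RT(\G,\frac{x_i-x_{i-1}}{(x_i-x_{i-1})+(C_i-C_{i-1})}))$, the prefix again being discarded by the $\liminf$. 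Because a richer \Min can mimic a poorer one by stopping early, \Min with budget $C_i$ can in fact secure $\min_{j \le i} p_j$, which lets me replace $(p_i)$ by its non-increasing monotonization and hence by an admissible sequence without increasing the sum. Summing yields $\sum_i \gamma(C_i)\payoff(f,g_i) \le \mathsf{Val} \le \max_{\Admi(B,\gamma)} \mathsf{Val}$.

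I expect the upper bound to be the main obstacle. The delicate points are: defining the $x_i$ canonically from a single, possibly non-stationary strategy $f$, so that the nested pressure strategies fix the investment levels unambiguously and \Max cannot profitably ``save'' budget across the reveal points; verifying that the segment-wise full-information upper bound genuinely applies to each tail despite the shared-prefix interaction; and making the ``mimic-a-poorer-budget'' monotonization rigorous so that the extracted sequence can be taken admissible. The lower bound, by contrast, is essentially careful budget bookkeeping layered on top of Theorem~\ref{thm:FI-MP}.
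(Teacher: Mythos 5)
Your \Max-side argument is correct and is essentially the paper's proof of Lemma~\ref{lem:partially-upper-bound}: the same wallet/stage strategy that advances when \Min's cumulative investment exceeds $C_i$, the same use of admissibility $p_1\geq\dots\geq p_n$ to remove \Min's incentive to push \Max into a later stage, and the same reliance on $\liminf$ to discard the finite prefix. (Note that your lemma labels are swapped relative to the paper's: its ``Upper bound'' lemma is the \Max guarantee $\MP^\downarrow\geq\mathsf{Val}$, and its ``Lower bound'' lemma is the \Min response.)

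The \Min-side argument has a genuine gap, located exactly where you flag the delicacy but do not resolve it. If $x_i$ is \Max's cumulative investment at a reveal point, then on the tail that follows \Max's \emph{available} budget is $B$ minus what he has invested so far, which can vastly exceed the increment $x_i-x_{i-1}$; the full-information \Min guarantee is governed by what \Max has available, not by what he will eventually choose to spend, so your segment-wise bound is not $\MP\big(\RT\big(\G,\frac{x_i-x_{i-1}}{(x_i-x_{i-1})+(C_i-C_{i-1})}\big)\big)$. With the correct available budgets your construction only certifies roughly the fully-informed player's value $\sum_i\gamma(C_i)\frac{B}{B+C_i}$, which by Thm.~\ref{thm:fully-informed} is strictly larger than $\mathsf{Val}$ in general: in $\lolli$ with $B=1$ and $\gamma$ uniform on $\{1,2\}$, pressure strategies of your kind give only $\payoff(f,g_1)\leq\frac12$ and $\payoff(f,g_2)\leq\frac13$, hence an upper bound of $\frac{5}{12}$, while $\mathsf{Val}=\frac13$. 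The paper closes this gap by a different mechanism: it takes each $g_i$ to be an $\epsilon$-\emph{best response} to $f$, sets $p_i=\payoff(f,g_i)$ (so $p_1\geq\dots\geq p_n$ holds because a richer \Min has more strategies), and then \emph{solves backwards} for $x_{n-1},\dots,x_1$ using continuity of $p\mapsto\MP(\RT(\G,p))$ so that these $p_i$ fit eq.~\eqref{eq:xprelation} with increments summing to $B$; the fact that \Max indeed spends at most $x_i$ in $\play(f,g_i)$ is established by an exchange argument (if he overspent, \Min with budget $C_{i+1}$ could mimic $g_i$ on her first $C_i$ of budget and then punish \Max's depleted remainder, contradicting the near-optimality of $g_{i+1}$ or the maximality of $x_i$). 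Some such coupling of the $x_i$'s to best-response payoffs seems unavoidable: pressure strategies chosen a priori, independently of how well $g_i$ actually does against $f$, cannot beat the fully-informed value.
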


We point to some interesting properties of \Max's value:

\begin{remark}
\label{rem:partially}
Consider the bowtie game (Fig.~\ref{fig:bowtie}) and assume \Max's budget is fixed to $B=1$ and \Min's budget is drawn uniformly at random from $\set{C_1, C_2}$. 
\begin{itemize}
\item When $C_1 = 1$ and $C_2 = 2$, the maximum is obtained at $x= 0.5$, thus \Max's optimal expected payoff is $\frac{1}{3} = \frac{B}{B+C_2}$. We note that \Max has a very simple optimal strategy in this case: ``assume the worst'' on \Min's initial budget. That is, play according to an optimal strategy for initial budgets $B$ and $C_2$.
\item When $C_1 = 1$, and $C_2 = 5$, the maximum is obtained at $x=0$. This is the dual of the case above. \Max can ``assume the best'' on \Min's initial budget and play according to an optimal strategy for budgets $B$ and $C_1$. When \Min's budget is $C_1$, this strategy guarantees a payoff of $\frac{B}{B+C_1}$. But when \Min's budget is $C_2$, the strategy cannot guarantee a payoff above $0$. Thus, the strategy guarantees an expected payoff of $\frac{1}{4} = \frac{1}{2}\cdot\frac{B}{B+C_1}$. 
\item There are cases in which \Max's optimal strategy is not one of the trivial cases above. When $C_1 = 1$ and $C_2 = 3$, \Max's optimal payoff is $\frac{1}{8}(5-2\cdot\sqrt{2})\approx 0.271$, which is strictly larger than both $\frac{1}{4}=\frac{1}{2}\cdot\frac{B}{B+C_1}$ and $\frac{1}{4}=\frac{B}{B+C_2}$.\hfill$\triangle$
\end{itemize}
\end{remark}

\begin{definition}
We denote the right-hand-side of eq.~\eqref{eq:value} by $\mathsf{Val}$.
\end{definition}

\begin{lemma}[Upper bound]
\label{lem:partially-upper-bound}
Consider a strongly-connected first-price poorman mean-payoff bidding game $\G$. Let $B$ be the initial budget of \Max and $\gamma$ be a finite budget distribution of \Min with $\supp(\gamma) =\{C_1,\dots,C_k\}$. Then, for every $\epsilon>0$, \Max has a strategy that guarantees an expected mean-payoff of at least $\mathsf{Val}-\epsilon$.
\end{lemma}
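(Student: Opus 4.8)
The plan is to prove the lower bound on $\MP^\downarrow$ by constructing, for any admissible sequence $(x_i)_{1\le i\le n}$ and any $\epsilon>0$, an explicit \Max strategy implementing the ``two-account'' idea sketched before the theorem, and then arguing that it guarantees expected payoff at least $\mathsf{Val}(x_1,\dots,x_n)-\epsilon$. Since $\mathsf{Val}$ is the maximum of $\mathsf{Val}(x_1,\dots,x_n)$ over admissible sequences, it suffices to do this for a sequence achieving (or $\epsilon$-approximating) that maximum. Because the right-hand side of \eqref{eq:value} is a maximum over a compact-looking but not-obviously-closed set, a preliminary step is to confirm that the maximum is attained, or else to work with a sequence achieving $\mathsf{Val}-\epsilon/2$ and absorb the slack into $\epsilon$.

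The construction I would give is a ``phase'' strategy. \Max maintains a spending account and a savings account, starting with $x_1$ and $B-x_1$ respectively. In phase $i$ (for $i=1,\dots,n$) \Max behaves exactly as in the full-information game against the hypothesis that \Min started with $C_i$: concretely, he plays a full-information optimal poorman strategy for initial ratio $\frac{x_i-x_{i-1}}{x_i-x_{i-1}+C_i-C_{i-1}}$ using the budget in his spending account. By Theorem~\ref{thm:FI-MP} (first-price poorman), while \Max remains in phase $i$ this guarantees a mean-payoff of at least $p_i=\MP\!\big(\RT(\G,\frac{x_i-x_{i-1}}{x_i-x_{i-1}+C_i-C_{i-1}})\big)$, up to the usual $\epsilon$-loss that full-information poorman strategies incur. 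The phase transition is triggered by observing \Min's cumulative investment: once \Min has invested at least $C_i$ (equivalently, her available budget would be nonpositive had she started with $C_i$), \Max concludes her true budget exceeds $C_i$, transfers funds so the spending account holds $x_{i+1}-x_i$ and the savings account holds $B-x_{i+1}$, and enters phase $i+1$ playing the appropriate full-information strategy against the residual budget $C_{i+1}-C_i$.

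The payoff analysis proceeds by conditioning on \Min's actual initial budget $C_j$. If \Min started with $C_j$, she can trigger at most phases $1,\dots,j$ before her budget is exhausted: to push \Max past phase $j$ she would have to invest more than $C_j$, which is impossible. Hence the play eventually settles in some phase $i\le j$ forever, and the tail mean-payoff is at least $p_i\ge p_j$ by the admissibility condition $p_1\ge\cdots\ge p_n$. Crucially, since mean-payoff depends only on the tail of the play (finite prefixes during earlier phases contribute nothing in the $\liminf$ of the average), the payoff against $C_j$ is at least $p_j-\epsilon$. Summing over $j$ weighted by $\gamma(C_j)$ gives expected payoff at least $\sum_j \gamma(C_j)p_j-\epsilon=\mathsf{Val}(x_1,\dots,x_n)-\epsilon$. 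The role of the $\delta$ in Definition~\ref{def:values} is to cover the extra budget needed so that, after ties and the cumulative loss from switching phases, each full-information subgame genuinely has the claimed ratio; I would allocate the $\delta$ slack across the $n$ phases.

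The main obstacle, and the step requiring the most care, is verifying the budget bookkeeping at phase transitions: when \Max decides \Min ``spent $C_i$'' and moves to phase $i+1$, I must confirm that the savings account truly still contains at least $B-x_{i+1}$, so that after the transfer the spending account holds the full $x_{i+1}-x_i$ needed to realize ratio $p_{i+1}$. This is where monotonicity $x_1\le\cdots\le x_n=B$ and the poorman payment rule (winning bids go to the bank, so \Max's own spending is only what he has invested) are used, and it is the point where an off-by-a-tie or a premature transition could break the guarantee; I would track the invariant that at the start of phase $i$ the savings account holds exactly $B-x_i$ and the spending account at least $x_i-x_{i-1}$. A secondary subtlety is that the mean-payoff is a $\liminf$, so one must ensure the full-information guarantee $p_i$ holds as a genuine asymptotic lower bound on the average and not merely in expectation over a finite horizon; this follows from the fact that \Max never leaves the final phase once \Min's budget is depleted, so the standard full-information poorman argument applies verbatim to the infinite tail.
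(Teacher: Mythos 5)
Your proposal is correct and follows essentially the same route as the paper's proof: the same wallet/phase strategy $f_{x_1,\dots,x_n}$ that plays an $\epsilon$-optimal full-information strategy with budget $x_{i+1}-x_i$ against residual budget $C_{i+1}-C_i$ once \Min's investments exceed $C_i$, the same observation that a \Min starting with $C_j$ can force the play only into some phase $i\le j$ whose tail guarantee $p_i\ge p_j$ by admissibility, and the same conditioning on $\gamma(C_j)$ to conclude. Your extra remarks on attainment of the maximum, the role of $\delta$, and the bookkeeping invariant at phase transitions are sensible elaborations of points the paper leaves implicit.
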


\begin{proof}
Fix $\epsilon>0$.
For each $(x_i)_{1 \leq i \leq n} \in \Admi(B,\gamma)$,
we construct a \Max strategy $f_{x_1,\dots,x_n}$ that guarantees a payoff of at least
$\mathsf{Val}(x_1,\dots,x_n)-\epsilon$
as follows:

\begin{itemize}
    \item \Max uses portion $x_1$ of his budget to play an $\epsilon$-optimal strategy against \Min with budget $C_1$. This is continued as long as \Min spends at most $C_1$.
    \item For each $1\leq i\leq n-1$, once \Min's investments exceed $C_i$, \Max starts using portion $x_{i+1}-x_i$ of his budget and plays according to an $\epsilon$-optimal strategy against budget $C_{i+1}-C_i$ of $\Min$. This is continued as long as \Min's investments do not exceed $C_{i+1}$.
\end{itemize}


Lemma~\ref{lem:partially-upper-bound} follows from Claim~$1$ below, which generalizes the analysis of Example~\ref{ex:partially-informed} to any \Min finite budget distribution.
As in the example, it is crucial to select $x_i$ such that \Min has an incentive to ``reveal'' (when she can), that her budget is larger than $C_i$. Formally, recall that $p_i = (x_i-x_{i-1})/(x_i-x_{i-1} + C_i - C_{i-1})$. Intuitively, $p_i$ can be thought of as the payoff when \Max plays according to the strategy above and \Min's budget is $C_i$. Then, we require that $p_1 \geq \dots \geq p_n$. 


\noindent{\bf Claim 1.} For each $(x_i)_{1 \leq i \leq n} \in \Admi(B,\gamma)$, \Max ensures a payoff of at least $\mathsf{Val}(x_1,\dots,x_n)-\epsilon$ by playing according to the strategy $f_{x_1,\dots,x_n}$.

To prove Claim 1, fix a strategy $g$ of \Min and consider $\play(f_{x_1,\dots,x_n},g)$. Denote by $c$ the highest value of the budget lost by \Min during the course of the play, and let $1\leq i\leq n$ be such that $C_{i-1}< c\leq C_i$. Then, by the construction of $f_{x_1,\dots,x_n}$, the payoff of the play is at least $p_i-\epsilon$. Since we assumed that $p_1\geq p_2\geq \dots \geq p_n$ and since $\MP\big(\RT(\G,p)\big)$ is a monotonically decreasing function in $p$, it follows that the payoff of $\play(f_{x_1,\dots,x_n},g)$ is at least $p_i-\epsilon$ if $\Min$'s initial budget is $C_i$. Therefore, as the probability of \Min's initial budget being $C_i$ is $\gamma(C_i)$, we conclude that the expected payoff of $\play(f_{x_1,\dots,x_n},g)$ is at least $\mathsf{Val}(x_1,\dots,x_n)-\epsilon$. Since the strategy $g$ of $\Min$ was arbitrary, Claim~1 follows.
\end{proof}

Recall that \Max guarantees an expected payoff of $c \in \Real$ if, intuitively, he can reveal the strategy that he plays according to and no matter how \Min responds, the expected payoff is at least $c$. Thus, in order to show a lower bound on \Max's value, we show that no matter which strategy \Max chooses, \Min can respond in a way that guarantees a payoff of at most $\mathsf{Val} + \epsilon$. Formally we have the following.

\begin{lemma}[Lower bound]\label{lem:partially-lower-bound}
    Given $\epsilon>0$ and a strategy $f$ of \Max, there exist strategies $g_1\in \S(C_1),\dots,g_n\in \S(C_n)$ of \Min such that $\sum_{i=1}^n \gamma(C_i)\cdot \payoff(f,g_i) \leq \mathsf{Val} + \epsilon$. 
\end{lemma}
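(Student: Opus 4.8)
The plan is to produce, against the given \Max strategy $f$, a \emph{coordinated} family of \Min strategies built from the full-information poorman optimal strategies of Theorem~\ref{thm:FI-MP}, and to read off from $f$ an increasing sequence $(x_i)$ that captures the budget trade-off $f$ implicitly commits to. Concretely, it suffices to find $0\le x_1\le\dots\le x_n\le B$ and strategies $g_1,\dots,g_n$ with $\sum_i\gamma(C_i)\cdot\payoff(f,g_i)\le \mathsf{Val}(x_1,\dots,x_n)+\epsilon$, and then to conclude using $\mathsf{Val}(x_1,\dots,x_n)\le\mathsf{Val}$. Since the extracted sequence need not satisfy the ordering $p_1\ge\dots\ge p_n$, a preliminary step is to show that admissibility is not binding at the optimum, i.e.
\[
\mathsf{Val} \;=\; \max_{0 \le x_1 \le \dots \le x_n = B} \mathsf{Val}(x_1,\dots,x_n),
\]
which I would obtain from a smoothing argument on the increments using that $r\mapsto\MP(\RT(\G,r))$ is continuous and monotonically decreasing; this lets me bound by $\mathsf{Val}$ without verifying admissibility of the extracted sequence.

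Next I would construct $g_1,\dots,g_n$ so that they are \emph{nested}: all of them prescribe identical bids until \Min's fully-informed play is forced to diverge, so that $\play(f,g_1),\dots,\play(f,g_n)$ share successively longer common prefixes. Along this common play I track \Max's cumulative investment and define $x_i$ to be the amount \Max has invested by the moment the scenario with budget $C_i$ separates from the scenarios with larger budget; thus $x_i-x_{i-1}$ is exactly the investment \Max is forced to make while \Min commits her $i$-th budget increment $C_i-C_{i-1}$. On the $j$-th increment, each $g_i$ (for $j\le i$) plays the full-information $\epsilon$-optimal \Min strategy of Theorem~\ref{thm:FI-MP} targeting the ratio whose random-turn value is $p_j$. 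The point of the nesting is that, because \Max commits to the single strategy $f$, the investments he makes on each increment are \emph{shared} across all scenarios, so the $x_i$ form a single well-defined increasing sequence with $x_n\le B$.

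The heart of the argument is to show that on the scenario where \Min's budget is $C_i$ the long-run average is at most $p_i+\epsilon$. Here I would use the full-information poorman bound in the following form: \Min, committing increment $C_i-C_{i-1}$ against the ratio realized by $f$ on that increment, caps the long-run average at $\MP(\RT(\G,\frac{x_i-x_{i-1}}{x_i-x_{i-1}+C_i-C_{i-1}}))+\epsilon = p_i+\epsilon$. The poorman payment rule is essential: lost bids make these caps sustainable, and they prevent \Max from improving a single scenario beyond $p_i$ without spending budget that he can no longer use in the other scenarios — this is precisely the trade-off that the set $\Admi(B,\gamma)$ encodes, and it is why a single strategy $f$ cannot be simultaneously optimal across all ratios $\frac{B}{B+C_i}$. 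Summing the per-scenario bounds weighted by $\gamma(C_i)$ yields $\sum_i\gamma(C_i)\cdot\payoff(f,g_i)\le \mathsf{Val}(x_1,\dots,x_n)+\epsilon\le\mathsf{Val}+\epsilon$.

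I expect the main obstacle to be the coordination step together with the poorman bookkeeping. One must define the transition triggers between increments purely from observable quantities (the past bids), keep every $g_i$ legal (never bidding more than the remaining committed budget), and — most delicately — make the $\liminf$ argument rigorous given that in poorman mean-payoff play the bids shrink and budgets converge rather than deplete, so that \Min is never simply overpowered after revealing her budget. I would most likely carry this out by induction on $n$: peel off the smallest increment $C_1$, apply the full-information bound to the common initial phase, and recurse on the residual strongly-connected game with \Max's budget reduced by $x_1$ and \Min's budgets reduced by $C_1$, with the base case $n=1$ being exactly the full-information \Min guarantee of Theorem~\ref{thm:FI-MP}.
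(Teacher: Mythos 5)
Your high-level plan --- extract from the play against $f$ a monotone sequence $(x_i)$ and bound $\sum_i\gamma(C_i)\payoff(f,g_i)$ by $\mathsf{Val}(x_1,\dots,x_n)$ --- matches the paper's, but it rests on a preliminary claim that is both unproven and very likely false: that admissibility is not binding, i.e.\ that $\max$ over all monotone sequences equals $\max$ over $\Admi(B,\gamma)$. The constraint $p_1\geq\dots\geq p_n$ is not a removable technicality; it is precisely the condition under which a fully-informed \Min has no incentive to under-spend, and it is what separates $\MP^\downarrow$ from the (possibly strictly larger) unconstrained maximum. A monotone but non-admissible split can score higher on paper exactly because \Min would refuse to cooperate with it, so bounding by the unconstrained maximum proves a weaker inequality than the lemma requires. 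There is also a circularity in your construction: you define $x_i$ as \Max's cumulative investment at the moment scenario $i$ ``separates,'' but the separation points depend on the strategies $g_i$, which you in turn define in terms of ratios determined by the $x_i$; and your ``heart of the argument'' implicitly assumes $f$ spends exactly $x_i-x_{i-1}$ per increment, which an arbitrary $f$ need not do.

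The paper avoids all of this by reversing the order of construction. Take $g_i$ to be any $\epsilon$-optimal response to $f$ for budget $C_i$; then $p_i=\payoff(f,g_i)$ satisfies $p_1\geq\dots\geq p_n$ \emph{automatically}, because $\S_2(C_1)\subseteq\dots\subseteq\S_2(C_n)$. The sequence is then defined backwards from $x_n=B$: using continuity of $p\mapsto\MP(\RT(\G,p))$, $x_{n-1}$ is chosen as the \emph{largest} $x$ with $p_n=\MP\big(\RT\big(\G,\tfrac{B-x}{B-x+C_n-C_{n-1}}\big)\big)$, and so on. The one substantive step is showing that \Max actually spends at most $x_{i}$ in $\play(f,g_{i})$: if he spent $x'>x_{i}$, then \Min with budget $C_{i+1}$ could mimic $g_{i}$ with her first $C_{i}$ of budget and afterwards play full-information-optimally with $C_{i+1}-C_{i}$ against \Max's depleted $B-x'$, contradicting either the $\epsilon$-optimality of $g_{i+1}$ or the maximality of $x_{i}$. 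This yields a sequence that satisfies eq.~\eqref{eq:xprelation} and is admissible by construction, so $\sum_i\gamma(C_i)p_i=\mathsf{Val}(x_1,\dots,x_n)\leq\mathsf{Val}$ with no smoothing step needed. I would rework your argument along these lines; the mimicking idea is the ingredient your proposal is missing.
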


\begin{proof}
Let $\epsilon > 0$ and suppose that \Max plays according to a strategy $f$. As a response, for each $1 \leq i \leq n$, when \Min's initial budget is $C_i$, she selects an $\epsilon$-optimal response strategy $g_i \in \S_2(C_i)$ against $f$. We show that the choice of $g_1,\ldots, g_n$ satisfies the claim. 

Intuitively, we find an admissible sequence $x_1, \ldots, x_n$ and a corresponding ``wallet-based'' strategy $f_{x_1,\dots,x_n}$ as constructed in the proof of Lemma~\ref{lem:partially-upper-bound}, and show that $f_{x_1,\dots,x_n}$ achieves a payoff no worse than $f$ against $g_1,\ldots, g_n$. The proof follows since 
\[ \sum_{i=1}^n \gamma(C_i)\cdot \payoff(f,g_i) = \mathsf{Val}(x_1,\dots, x_n)+\epsilon \leq \mathsf{Val} + \epsilon. \]

To construct the admissible sequence $(x_i)_{1 \leq i \leq n} \in \Admi(B,\gamma)$, we set $x_n=B$ and define the remaining $x_i$'s as follows. Let $p_i = \payoff(f,g_i)$ for each $1\leq i\leq n$. Since $C_1 < \dots < C_n$, we have $p_1\geq \dots \geq p_n$. By Theorem~\ref{thm:FI-MP}, we have $\MP(\RT(\G, 0)) \leq p_n \leq \MP(\RT(\G, \frac{B}{B+C_n}))$. On the other hand, it is known~\cite{Cha12,Sol03} that the value $MP(\RT(\G, p))$ is a continuous function in $p$. Hence, there exists $B\cdot\frac{C_{n-1}}{C_n} \leq x \leq B$ such that $p_n = \MP(\RT(\G,\frac{B-x}{B-x+C_n-C_{n-1}}))$. We set $x_{n-1}$ to be the largest such~$x$.

We claim that, when the initial budget of \Min is $C_{n-1}$, \Max does not spend more than $x_{n-1}$ in the $\play(f,g_{n-1})$. Indeed, suppose towards contradiction that $\Max$ spends $x'>x_{n-1}$ while playing against $g_{n-1}$. Then, if the initial budget of \Min was $C_n$ and \Min used portion $C_{n-1}$ of her budget to play according to $g_{n-1}$, \Max would eventually be left with a budget of $B-x'<B-x_{n-1}$ and \Min would be left with at least $C_n-C_{n-1}$. Thus, \Min could play optimally for an initial budget of at least $C_n-C_{n-1}$ against a \Max budget smaller than $B-x'$, to ensure a payoff of at most $\MP(\RT(\G,\frac{B-x'}{B-x'+C_n-C_{n-1}})) \leq \MP(\RT(\G,\frac{B-x_{n-1}}{B-x_{n-1}+C_n-C_{n-1}})) = p_n$.
This would contradict either the optimality of $g_n$ in the case of strict inequality, or the maximality of $x_{n-1}$ in the case of equality. We thus conclude that \Max spends at most $x_{n-1}$ in $\play(f,g_{n-1})$.

Next, we define $x_{n-2}$. Note that the fact that \Max spends at most $x_{n-1}$ in $\play(f,g_{n-1})$ also implies that $\MP(\RT(\G, 0)) \leq p_{n-1}\leq \MP(\RT(\G,x_{n-1}/(x_{n-1}+C_{n-1}))$. Thus, as $MP(\RT(\G, p))$ is continuous in $p$, there exists $x_{n-1}\cdot C_{n-2}/C_{n-1} \leq x\leq x_{n-1}$ with $p_{n-1} = \MP(\RT(\G,\frac{x_{n-1} - x}{x_{n-1}-x+C_{n-1}-C_{n-2}}))$. Set $x_{n-2}$ to be the largest such $x$. Then, the same argument as above shows that \Max does not lose more than $x_{n-2}$ in the $\play(f,g_{n-2})$. We may then inductively repeat this procedure in order to define  $x_{n-3},\dots,x_1$. Note that this results in a sequence $0\leq x_1\leq x_2\leq \dots\leq x_n=B$ which by construction satisfies eq.~\eqref{eq:xprelation} for each $1\leq i\leq n$. Since we already showed that $p_1\geq \dots\geq p_n$, it follows that $(x_i)_{1 \leq i \leq n} \in \Admi(B,\gamma)$.
\end{proof}

\subsubsection{All-pay poorman bidding}
We extend the technique in the previous section to all-pay poorman bidding. 
In order to state our results formally, we need to redefine the notion of admissible sequences since the optimal payoff that \Max can guarantee under all-pay bidding differs from the payoff that he can guarantee under first-price bidding. Analogously to Def.~\ref{def:admissible} but now under all-pay bidding, we say that a sequence $(x_i)_{1 \leq i \leq n}$ of budgets is called \emph{admissible} with respect to a budget $B$ of \Max and a budget distribution $\gamma$ of \Min if  $0 \leq x_1 \leq x_2 \leq \dots \leq x_n = B$ and
$p_1 \geq p_2 \geq \ldots \geq p_n$,~where now
\begin{equation*}
    p_i = \MP\Big(\RT\Big(\G,\Big(1-\frac{C_i - C_{i-1}}{x_i-x_{i-1}}\Big)\cdot\mathbb{I}\Big(x_i-x_{i-1} > C_i - C_{i-1}\Big)\Big)\Big)
\end{equation*}
for each $1\leq i\leq n$, with $x_0=0$ and $C_0=0$. Here, $\mathbb{I}$ is an indicator function that evaluates to $1$ if the input logical formula is true, and to $0$ if it is false. We are now ready to state our result on all-pay poorman mean-payoff bidding games.

\begin{theorem}[Mean-payoff value of the partially-informed player]\label{thm:APmaxmeanpayoff}
Consider a strongly-connected all-pay poorman mean-payoff bidding game $\G$. Let $B$ be the initial budget of \Max and $\gamma$ be a finite budget distribution of \Min with $\supp(\gamma) =\{C_1,\dots,C_n\}$. Then
\begin{equation}\label{eq:APvalue}
\MP^\downarrow(\G, \beta, \gamma) = \max_{(x_i)_{1 \leq i \leq n} \in \Admi(B,\gamma)}\,   \mathsf{Val}(x_1,\dots,x_n),
\end{equation}
where 
$\mathsf{Val}(x_1,\dots,x_n) = \sum_{i=1}^n   \gamma(C_i) \cdot \MP\Big( \RT\Big( \G, \Big(1-\frac{C_i - C_{i-1}}{x_i-x_{i-1}}\Big)$$\cdot\mathbb{I}\Big(x_i-x_{i-1} > C_i - C_{i-1}\Big) \Big) \Big)$
\stam{
\begin{equation}\label{eq:sum}
    \mathsf{Val}(x_1,\dots,x_n) = \sum_{i=1}^n   \gamma(C_i) \cdot \MP\Big( \RT\Big( \G, \Big(1-\frac{C_i - C_{i-1}}{x_i-x_{i-1}}\Big)\cdot\mathbb{I}\Big(x_i-x_{i-1} > C_i - C_{i-1}\Big) \Big) \Big)
\end{equation}
}
with $x_0 = 0$ and $C_0 = 0$ and $\mathbb{I}$ an indicator function.
\end{theorem}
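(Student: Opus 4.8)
The plan is to adapt the two-part argument of Theorem~\ref{thm:maxmeanpayoff} (the first-price case) to the all-pay setting, proving the upper and lower bounds on $\MP^\downarrow(\G,\beta,\gamma)$ separately. The only structural change is that the payoff guaranteed by a full-information wallet against a fixed \Min budget is now governed by the all-pay formula from Theorem~\ref{thm:FI-MP}, namely $\MP\bigl(\RT(\G,(2r-1)/r)\bigr)$ for ratio $r>0.5$ and $0$ otherwise. First I would verify that the new $p_i$ in the redefined admissible sequence is exactly this quantity: writing a wallet's local ratio as $r = (x_i-x_{i-1})/(x_i-x_{i-1}+C_i-C_{i-1})$, a direct substitution gives $(2r-1)/r = 1-\frac{C_i-C_{i-1}}{x_i-x_{i-1}}$, and the condition $r>0.5$ is exactly $x_i-x_{i-1}>C_i-C_{i-1}$, which the indicator $\mathbb{I}$ encodes (collapsing the payoff to $\MP(\RT(\G,0))$ when the wallet cannot win a majority of biddings). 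So the bias argument to $\RT$ is the all-pay analogue of eq.~\eqref{eq:xprelation}, and the rest of the scaffolding should carry over verbatim.

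For the upper bound I would reuse the wallet strategy $f_{x_1,\dots,x_n}$ essentially unchanged: \Max partitions his budget into increments $x_{i}-x_{i-1}$, and once \Min's cumulative investments cross $C_{i-1}$ he switches to playing the all-pay full-information $\epsilon$-optimal strategy with a fresh wallet of size $x_{i+1}-x_i$ against \Min's remaining budget of at most $C_{i+1}-C_i$. The key point to re-check is that in all-pay bidding the relevant bookkeeping is \Min's \emph{total investments} (all-pay means both players pay every bid), so $\Inv_2(h)$ crossing a threshold $C_{i-1}$ correctly signals that \Min's initial budget exceeds $C_{i-1}$; the monotonicity of $\MP(\RT(\G,\cdot))$ and the admissibility ordering $p_1\geq\dots\geq p_n$ then yield, exactly as in Claim~1, that when \Min's true budget is $C_i$ the play pays at least $p_i-\epsilon$, and taking the $\gamma$-expectation gives $\mathsf{Val}(x_1,\dots,x_n)-\epsilon$.

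For the lower bound I would mirror Lemma~\ref{lem:partially-lower-bound}: fix an arbitrary \Max strategy $f$, let \Min respond with $\epsilon$-optimal all-pay strategies $g_i\in\S_2(C_i)$, set $p_i=\payoff(f,g_i)$, and construct an admissible sequence backward from $x_n=B$. The crux is the inductive ``\Max does not overspend'' argument: if against $g_{n-1}$ \Max invested more than the candidate $x_{n-1}$, then a \Min holding budget $C_n$ could shadow $g_{n-1}$ for her first $C_{n-1}$ of investment and then exploit \Max's depleted residual budget to cap the payoff below $p_n$, contradicting optimality of $g_n$ or maximality of $x_{n-1}$. I expect this step to be the main obstacle, because all-pay bidding is less forgiving than first-price: \Min pays even for lost biddings, so the residual-budget accounting after the ``shadowing'' phase is more delicate, and I would need the all-pay analogue of the continuity and monotonicity of $p\mapsto\MP(\RT(\G,p))$ together with a careful argument that the indicator discontinuity (the jump to payoff $0$ once a wallet drops below the majority-winning ratio) does not break the backward construction. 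Once the sequence $0\leq x_1\leq\dots\leq x_n=B$ is shown to satisfy the redefined admissibility relation with the correct $p_i$, the identity $\sum_i\gamma(C_i)\payoff(f,g_i)=\mathsf{Val}(x_1,\dots,x_n)+\epsilon\leq\mathsf{Val}+\epsilon$ closes the proof.
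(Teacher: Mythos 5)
Your proposal follows essentially the same route as the paper: the identical wallet-based upper bound with the algebraic translation $(2r-1)/r = 1-\frac{C_i-C_{i-1}}{x_i-x_{i-1}}$, and a lower bound that mirrors the backward construction of Lemma~\ref{lem:partially-lower-bound} (the paper's own all-pay lower bound is likewise only sketched at this level of detail). The one point the paper makes explicit that you subsume into your ``indicator discontinuity'' worry is the separate treatment of the degenerate case where $B<C_i$ for every $i$, in which $\mathsf{Val}=0$ and \Min trivially caps the payoff at $\epsilon$; note also that the indicator causes no actual discontinuity, since the bias tends to $0$ as $x_i-x_{i-1}$ approaches $C_i-C_{i-1}$ from above, so the intermediate-value argument carries over.
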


We introduce the following notation:
\begin{definition}
We denote the right-hand-side of eq.~\eqref{eq:APvalue} by $\mathsf{Val}$.
\end{definition}

The proof of the upper bound is similar to the proof for first-price poorman bidding and we include it for completeness.

\begin{lemma}[Upper bound]
\label{lem:partially-upper-bound-allpay}
Consider a strongly-connected all-pay poorman mean-payoff bidding game $\G$. Let $B$ be the initial budget of \Max and $\gamma$ be a finite budget distribution of \Min with $\supp(\gamma) =\{C_1,\dots,C_k\}$. Then, for every $\epsilon>0$, \Max has a strategy that guarantees an expected mean-payoff of at least $\mathsf{Val}-\epsilon$.
\end{lemma}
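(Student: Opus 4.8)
The plan is to follow the proof of Lemma~\ref{lem:partially-upper-bound} verbatim in its overall shape, substituting the all-pay poorman full-information guarantee of Theorem~\ref{thm:FI-MP} for the first-price one. Fix $\epsilon > 0$ and let $(x_i)_{1 \leq i \leq n} \in \Admi(B,\gamma)$ attain the maximum in eq.~\eqref{eq:APvalue}, which exists since the feasible set is closed and bounded and $\mathsf{Val}(\cdot)$ is continuous in the $x_i$'s. I would build the wallet strategy $f_{x_1,\dots,x_n}$ exactly as before: \Max partitions his budget $B = x_n$ into consecutive portions $x_1, x_2 - x_1, \dots, x_n - x_{n-1}$, and in \emph{phase $i$}, entered the moment \Min's cumulative investments first exceed $C_{i-1}$, he plays an $\epsilon$-optimal full-information all-pay poorman strategy for \Max-budget $x_i - x_{i-1}$ against \Min-budget $C_i - C_{i-1}$, continuing as long as \Min's cumulative investment stays at most $C_i$. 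The only change from the first-price construction is bookkeeping: under all-pay bidding the \emph{investment} is the running sum of \emph{all} bids, not only the winning ones, and the portion $x_i - x_{i-1}$ is likewise drained by every bid \Max makes in phase $i$. Since the full-information strategy never overspends its allotted budget, the portions remain disjoint and well-defined.

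The core step is the all-pay analogue of Claim~1. Fixing an arbitrary \Min strategy $g$, I would let $c$ denote the supremum of \Min's cumulative investment along $\play(f_{x_1,\dots,x_n}, g)$ and choose $i$ with $C_{i-1} < c \leq C_i$, so that phase $i$ is the last phase entered and \Min invests at most $C_i - C_{i-1}$ within it. From the instant phase $i$ begins, \Max plays an $\epsilon$-optimal full-information all-pay strategy for the ratio $r_i = \frac{x_i - x_{i-1}}{(x_i - x_{i-1}) + (C_i - C_{i-1})}$; Theorem~\ref{thm:FI-MP} then guarantees a mean-payoff of at least $p_i - \epsilon$, where $p_i = \MP(\RT(\G, (2r_i - 1)/r_i))$ and the identity $(2r_i-1)/r_i = 1 - \frac{C_i - C_{i-1}}{x_i - x_{i-1}}$ matches the bias in the statement. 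Because the payoff is a $\liminf$ long-run average, the bounded prefix consumed by phases $1, \dots, i-1$ is irrelevant. If \Min's true initial budget is $C_j$, then $c \leq C_j$ forces $i \leq j$, whence admissibility ($p_1 \geq \dots \geq p_n$) gives a payoff of at least $p_i - \epsilon \geq p_j - \epsilon$; averaging over \Min's initial budget with weights $\gamma(C_j)$ yields expected payoff at least $\mathsf{Val}(x_1, \dots, x_n) - \epsilon = \mathsf{Val} - \epsilon$.

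The step that genuinely differs from the first-price case, and where I expect the only real friction, is the case split encoded by the indicator $\mathbb{I}$. When $x_i - x_{i-1} > C_i - C_{i-1}$ we have $r_i > 1/2$ and Theorem~\ref{thm:FI-MP} supplies the genuine positive guarantee used above. When $x_i - x_{i-1} \leq C_i - C_{i-1}$, however, $r_i \leq 1/2$ and pure all-pay poorman strategies are ``useless'': the best \Max can ensure is $\MP(\RT(\G,0))$, which is precisely the value the indicator assigns to $p_i$. Here I would argue the guarantee holds trivially, since on a strongly-connected graph \Max can always secure at least $\MP(\RT(\G,0))$ (even by never moving, \Min can force no less than the minimum mean-payoff she achieves in full control). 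A second point I would check explicitly is that the phase-$i$ guarantee remains valid even when \Min's remaining budget strictly exceeds $C_i - C_{i-1}$ (the case $j > i$): the full-information bound depends only on \Min's investment \emph{within} phase $i$ being at most $C_i - C_{i-1}$, which is exactly what $c \leq C_i$ ensures, so \Min's unused budget is immaterial.
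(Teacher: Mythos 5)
Your proposal is correct and follows essentially the same route as the paper: the same wallet-based strategy $f_{x_1,\dots,x_n}$, the same ``last phase entered'' argument via the largest investment $c$ of \Min, and the same use of admissibility ($p_1 \geq \dots \geq p_n$) to pass from phase index $i$ to \Min's true budget index $j$. You in fact supply slightly more detail than the paper does — notably the verification that $(2r_i-1)/r_i = 1 - \frac{C_i-C_{i-1}}{x_i-x_{i-1}}$ and the trivial handling of the indicator case $x_i - x_{i-1} \leq C_i - C_{i-1}$ — all of which is consistent with the intended argument.
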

\begin{proof}
Fix $\epsilon>0$. For each $(x_i)_{1 \leq i \leq n} \in \Admi(B,\gamma)$, we construct a \Max strategy $f_{x_1,\dots,x_n}$ that guarantees a payoff of at least
$\mathsf{Val}(x_1,\dots,x_n)-\epsilon$
as follows:
\begin{itemize}
    \item \Max uses portion $x_1$ of his budget to play an $\epsilon$-optimal strategy against \Min with budget $C_1$. This is continued as long as \Min spends at most $C_1$.
    \item For each $1\leq i\leq n-1$, once \Min's investments exceed $C_i$, \Max starts using portion $x_{i+1}-x_i$ of his budget and plays according to an $\epsilon$-optimal strategy against budget $C_{i+1}-C_i$ of $\Min$. This is continued as long as \Min's investments do not exceed $C_{i+1}$.
\end{itemize}
Lemma~\ref{lem:partially-upper-bound-allpay} follows immediately from Claim~$1$ below. Recall that, for all-pay poorman mean-payoff bidding games, we defined $p_i = \MP(\RT(\G,(1-\frac{C_i - C_{i-1}}{x_i-x_{i-1}})\cdot\mathbb{I}(x_i-x_{i-1} > C_i - C_{i-1})))$.

\paragraph{Claim 1.} For each $(x_i)_{1 \leq i \leq n} \in \Admi(B,\gamma)$, \Max ensures a payoff of at least $\mathsf{Val}(x_1,\dots,x_n)-\epsilon$ by playing according to the strategy $f_{x_1,\dots,x_n}$.

To prove Claim 1, fix a strategy $g$ of \Min and consider $\play(f_{x_1,\dots,x_n},g)$. Denote by $c$ the highest value of the budget lost by \Min during the course of the play, and let $1\leq i\leq n$ be such that $C_{i-1}< c\leq C_i$. Then, by the construction of $f_{x_1,\dots,x_n}$, the payoff of the play is at least $p_i-\epsilon$. Since we assumed that $p_1\geq p_2\geq \dots \geq p_n$ and since $\MP\big(\RT(\G,p)\big)$ is a monotonically decreasing function in $p$, it follows that the payoff of $\play(f_{x_1,\dots,x_n},g)$ is at least $p_i-\epsilon$ if $\Min$'s initial budget is $C_i$. Therefore, as the probability of \Min's initial budget being $C_i$ is $\gamma(C_i)$, we conclude that the expected payoff of $\play(f_{x_1,\dots,x_n},g)$ is at least $\mathsf{Val}(x_1,\dots,x_n)-\epsilon$. Since the strategy $g$ of $\Min$ was arbitrary, Claim~1 follows.
\end{proof}

The proof of the lower bound is also similar to the proof for first-price poorman bidding but it requires care. In particular, if the initial budget of \Min is $C_i>B$, then \Min can guarantee an arbitrarily small payoff against any strategy of \Max according to Theorem~\ref{thm:FI-MP}. We need to take this into account when constructing an admissible sequence $x_1, \ldots, x_n$ and a corresponding ``wallet-based'' strategy $f_{x_1,\dots,x_n}$.

\begin{lemma}[Lower bound]\label{lem:partially-lower-bound-allpay}
    Given $\epsilon>0$ and a strategy $f$ of \Max, there exist strategies $g_1\in \S(C_1),\dots,g_n\in \S(C_n)$ of \Min such that $\sum_{i=1}^n \gamma(C_i)\cdot \payoff(f,g_i) \leq \mathsf{Val} + \epsilon$. 
\end{lemma}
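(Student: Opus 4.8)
The plan is to mirror the structure of the first-price lower bound (Lemma~\ref{lem:partially-lower-bound}), adapting it to the all-pay payoff formula and to the new subtlety that \Min budgets exceeding $B$ force the payoff arbitrarily close to the minimum. Fix $\epsilon>0$ and a \Max strategy $f$. As in the first-price case, let each \Min of initial budget $C_i$ respond with an $\epsilon$-optimal strategy $g_i\in\S_2(C_i)$ against $f$, and set $p_i=\payoff(f,g_i)$. Since $C_1<\dots<C_n$ and a larger initial budget can only help \Min, we again get $p_1\geq\dots\geq p_n$. The goal is to exhibit an admissible sequence $(x_i)_{1\leq i\leq n}\in\Admi(B,\gamma)$ with $\mathsf{Val}(x_1,\dots,x_n)\geq\sum_i\gamma(C_i)\,p_i-\epsilon$, so that the definition of $\mathsf{Val}$ as a maximum closes the argument.

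The construction of the $x_i$ proceeds downward from $x_n=B$, exactly as before, but using the all-pay formula $p_i=\MP(\RT(\G,(1-\frac{C_i-C_{i-1}}{x_i-x_{i-1}})\cdot\mathbb{I}(x_i-x_{i-1}>C_i-C_{i-1})))$. First I would record the achievable range of $p_n$: by Theorem~\ref{thm:FI-MP}, all-pay poorman bidding guarantees a positive payoff only above ratio $0.5$, so if $B\leq C_n$ the best \Max can force is $\MP(\RT(\G,0))$, and if $B>C_n$ he can reach up to $\MP(\RT(\G,(2B-C_n)/(B)))$-type values; either way $p_n$ lands in an interval over which, by continuity of $p\mapsto\MP(\RT(\G,p))$ \cite{Cha12,Sol03} and the explicit all-pay effective-ratio map, there is an $x$ with $p_n=\MP(\RT(\G,(1-\frac{C_n-C_{n-1}}{B-x})\mathbb{I}(\cdots)))$; I take $x_{n-1}$ to be the largest such $x$. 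I would then replay the first-price ``overspending'' argument: were \Max to invest more than $x_{n-1}$ against $g_{n-1}$, a \Min of true budget $C_n$ could spend $C_{n-1}$ imitating $g_{n-1}$, leave at least $C_n-C_{n-1}$, and then play an optimal full-information all-pay strategy against \Max's depleted budget to force a payoff at most $p_n$ — contradicting either the optimality of $g_n$ or the maximality of $x_{n-1}$. Hence \Max spends at most $x_{n-1}$ against $g_{n-1}$, which bounds $p_{n-1}$ into a range where continuity again yields $x_{n-2}$, and the induction proceeds down to $x_1$. The resulting sequence satisfies $0\leq x_1\leq\dots\leq x_n=B$ and the all-pay equation for $p_i$, and since $p_1\geq\dots\geq p_n$ it is admissible.

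The main obstacle I anticipate is the handling of the indicator and the case $C_i>B$. Unlike the continuous first-price map $\frac{x-x'}{x-x'+C-C'}$, the all-pay effective bias $(1-\frac{C_i-C_{i-1}}{x_i-x_{i-1}})\mathbb{I}(x_i-x_{i-1}>C_i-C_{i-1})$ is pinned to $0$ on the whole regime $x_i-x_{i-1}\leq C_i-C_{i-1}$ and only becomes positive, rising toward $1$, once the spending gap exceeds the budget gap. So the intermediate-value step that produced $x_{i-1}$ in the first-price proof must be recast: I must verify that the target $p_i$ actually lies in the image of this piecewise map on the admissible interval $[x_i\cdot\frac{C_{i-1}}{C_i},x_i]$, and when $p_i=\MP(\RT(\G,0))$ (which is forced whenever $C_i>B$ makes the whole wallet too small to ever outbid) I should simply set the corresponding gap so the indicator is $0$ rather than solving an equation — choosing, consistently with the downward recursion, the largest feasible $x_{i-1}$. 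The overspending contradiction argument also needs the monotonicity of the all-pay effective bias in the remaining \Max budget, which holds because shrinking $B-x'$ only decreases the spending gap and hence the bias, keeping $\MP(\RT(\G,\cdot))$ no larger than $p_n$. Once these piecewise and boundary cases are checked, the concluding inequality $\sum_i\gamma(C_i)\,p_i=\mathsf{Val}(x_1,\dots,x_n)+\epsilon\leq\mathsf{Val}+\epsilon$ follows verbatim as in Lemma~\ref{lem:partially-lower-bound}.
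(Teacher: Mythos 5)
Your proposal follows essentially the same route as the paper: $\epsilon$-optimal responses $g_i$, a downward construction of an admissible sequence from $x_n=B$ via continuity of $p \mapsto \MP(\RT(\G,p))$ together with the overspending contradiction, and special treatment of the regime where the indicator pins the effective bias to $0$ — indeed the paper's own proof is terser than yours, deferring the construction to the first-price lemma and only separating out the case where every $C_i$ exceeds $B$ (where $\mathsf{Val}=0$). One minor slip: for $B>C_n$ the effective bias $(2r-1)/r$ with $r=B/(B+C_n)$ equals $(B-C_n)/B$, not $(2B-C_n)/B$, but this does not affect your argument.
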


\begin{proof}
Let $\epsilon > 0$ and suppose that \Max plays according to a strategy $f$. As a response, for each $1 \leq i \leq n$, when her initial budget is $C_i$, \Min selects a strategy $g_i \in \S_2(C_i)$ that is $\epsilon$-optimal against $f$. We show that the choice of $g_1,\ldots, g_n$ satisfies the claim. 

First, if $B<C_i$ for each $1\leq i\leq n$, then from eq.~\eqref{eq:value} and eq.~\eqref{eq:sum} we see that $\mathsf{Val}=0$. On the other hand, $\Max$ cannot guarantee any payoff better than $0$ against any possible budget of \Min, thus for each $i$ and for each strategy of $\Max$ there exists a response strategy of \Min that ensures payoff of at most $\epsilon$. Therefore, by our choice of $g_1,\dots,g_n$ we deduce that $\sum_{i=1}^n \gamma(C_i)\cdot \payoff(f,g_i)\leq \sum_{i=1}^n \gamma(C_i)\cdot \epsilon = \epsilon = \mathsf{Val} + \epsilon$, as desired.

Now, assume that there exists some $C_i<B$ and let $i^{\ast}$ be the largest such index. To prove that our choice of $g_1,\ldots, g_n$ satisfies the claim, we find an admissible sequence $x_1, \ldots, x_n$ and a corresponding ``wallet-based'' strategy $f_{x_1,\dots,x_n}$ as constructed in the proof of Lemma~\ref{lem:partially-upper-bound-allpay}, and show that $f_{x_1,\dots,x_n}$ achieves a payoff no worse than $f$ against $g_1,\ldots, g_n$. The proof follows since 
\[ \sum_{i=1}^n \gamma(C_i)\cdot \payoff(f,g_i) \leq \mathsf{Val}(x_1,\dots, x_n)+\epsilon \leq \mathsf{Val} + \epsilon. \]
\end{proof}

\subsection{The mean-payoff value of the fully-informed player under first-price poorman bidding}
\label{sec:fully}
In this section we identify the optimal expected payoff that the fully-informed player can guarantee in the bowtie game (Fig.~\ref{fig:bowtie}) under first-price bidding. Suppose that \Max's initial budget is $B$ and \Min's initial budget is drawn from a distribution $\gamma$. Consider the following collection of naive strategies for \Min: when her initial budget is $C \in \supp(\gamma)$, \Min plays according to an optimal full-information strategy for the ratio $\frac{B}{B+C}$. 

We find it surprising that this collection of strategies is optimal for \Min in the bowtie game. The technical challenge in this section is the lower bound. This result complements Thm.~\ref{thm:maxmeanpayoff}: we characterize both \Min and \Max's values in the bowtie game when the players are restricted to use pure strategies. We show, somewhat unexpectedly, that the two values do not necessarily coincide. 

In order to state the result formally, we need the following definition. Intuitively, the {\em potential} of $\zug{B, \gamma}$ is the optimal expected payoff when \Min plays according to the collection of naive strategies described above. 


\begin{definition}
{\bf (Potential).}
Given a budget $B \in \mathbb{R}$ of \Max and a budget distribution $\gamma$
with support $\supp(\gamma)=\{C_1,C_2,\dots,C_k\}$ of \Min,
we define $\Genpot{B}{\gamma} =\sum_{j=1}^k \gamma(C_j) \cdot \frac{B}{B+ C_j}$.
\end{definition}

\smallskip
The main result in this section is given in the following theorem, whose proof follows from Lemmas~\ref{lemma:up} and~\ref{lemma:low}.

\begin{theorem}[Mean-payoff value of the fully-informed player]\label{thm:fully-informed}
Consider the bowtie game $\lolli$.
Let $B$ be the initial budget of \Max and $\gamma$
be a finite budget distribution of \Min with $\supp(\gamma) =\{C_1,C_2,\dots,C_k\}$.
Then, 
\begin{equation}\label{eq:valueIMPERFECT}
    \MP^\uparrow(\G, B, \gamma) = \Genpot{B}{\gamma} = \sum_{j=1}^k   \gamma(C_j) \cdot \frac{B}{B + C_j}.
\end{equation}
\end{theorem}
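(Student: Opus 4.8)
The plan is to prove the two inequalities $\MP^\uparrow(\G,B,\gamma)\le \Genpot{B}{\gamma}$ and $\MP^\uparrow(\G,B,\gamma)\ge \Genpot{B}{\gamma}$ separately, matching the split into Lemma~\ref{lemma:up} (upper bound) and Lemma~\ref{lemma:low} (lower bound). For the upper bound I would let \Min play the naive collection: when her drawn budget is $C_j$, she plays a full-information $\epsilon$-optimal strategy for the ratio $\frac{B}{B+C_j}$, using the extra $\delta$ budget allowed by Definition~\ref{def:values}. Since \Max's budget is fixed to $B$, each such pairing is a genuine full-information game, so by Theorem~\ref{thm:FI-MP} (together with $\MP(\RT(\lolli,p))=p$) this strategy caps \Max's payoff at $\frac{B}{B+C_j}+\epsilon$ regardless of \Max's response. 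Averaging over $\gamma$ yields $\sum_j \gamma(C_j)\frac{B}{B+C_j}+\epsilon = \Genpot{B}{\gamma}+\epsilon$, which is exactly the first bullet of $\val^\uparrow$. This direction is routine.

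The substance is the lower bound: for an arbitrary committed collection $g_1,\dots,g_n$ of \Min strategies I must produce a single \Max strategy $f$ with $\sum_i \gamma(C_i)\,\payoff(f,g_i)\ge \Genpot{B}{\gamma}-\epsilon$. The guiding principle is that \Max should never spend budget before he has to: as long as several budgets $C_i$ are still consistent with the observed bid history, the corresponding $g_i$ prescribe identical bids, so \Max bids (essentially) $0$ and concedes these rounds. Conceding is harmless for two reasons. First, the conceded rounds form a finite prefix and hence do not affect the mean-payoff; second, every round \Min wins she pays her bid, so \Max's budget is preserved while \Min's shrinks, which only improves \Max's ratio. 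The moment two surviving strategies bid differently, \Max's observation refines; since $\supp(\gamma)$ is finite this can happen at most $n-1$ times, so the consistent set stabilizes after a finite prefix. Once a consistent set $S$ can no longer be split, \Max engages, playing a full-information $\epsilon$-optimal strategy against the smallest surviving budget $C_{\min S}$. If the true budget is $C_i$ with $i\in S$, then either $g_i$ keeps mimicking the $C_{\min S}$-play forever, in which case it spends at most $C_{\min S}$ and \Max secures at least $\frac{B'}{B'+C_{\min S}}\ge \frac{B}{B+C_i}$ with his preserved budget $B'$ (here $B'=B$ at the first engagement), or it eventually deviates, which refines $S$ once more and recurses.

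The heart of the argument — and the step I expect to be the main obstacle — is the budget bookkeeping showing that each branch is resolved at a ratio no worse than $\frac{B}{B+C_i}$, so that the weighted sum telescopes to $\Genpot{B}{\gamma}$. I would formalize this by induction on $|S|$, carrying the invariant that whenever \Max first commits budget against a set $S$, his ratio against every $C_i\in S$ is at least $\frac{B}{B+C_i}$. The delicate point, exactly dual to the analysis of Lemma~\ref{lem:partially-lower-bound}, is to control how much budget \Max spends during an engagement phase that \Min later escapes by deviating, and to argue, via continuity of $p\mapsto\MP(\RT(\G,p))$ and a maximality choice of the reveal thresholds, that any such deviation necessarily costs \Min enough budget to keep \Max's post-deviation ratio favorable. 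The $\epsilon$'s accumulated over the at most $n$ engagement phases sum to a constant multiple of $\epsilon$, which is absorbed by rescaling. Combining the two bounds gives $\MP^\uparrow(\G,B,\gamma)=\Genpot{B}{\gamma}$, and juxtaposing with Theorem~\ref{thm:maxmeanpayoff} exhibits the claimed gap between the players' values.
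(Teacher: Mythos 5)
Your upper bound matches Lemma~\ref{lemma:up} and is fine. The lower bound, however, has two genuine gaps. First, the strategy ``concede until the consistent set can no longer be split, then engage'' is not a well-defined strategy: whether the surviving set $S$ will ever split again is a property of the infinite future behaviour of the $g_j$'s, not of any finite history, so \Max has no observable trigger for when to start engaging. If all $g_j$ coincide forever (e.g.\ all bid $0$ and steer to the weight-$0$ vertex), your \Max waits indefinitely and the payoff is $0$, not $\Genpot{B}{\gamma}$; note also that conceding does not force \Min to spend anything, so ``\Min's budget shrinks'' need not hold. The paper's proof of Lemma~\ref{lemma:low} avoids this by never waiting: \Max \emph{always} engages, but only with a small slice $x_i=\frac{\epsilon}{2}\lambda^i B$ of his budget played against a presumed \Min reserve $y_i=\rho x_i$, and the case split is on what \Min does in response (never exceeds $y_i$; reveals; or exceeds $y_i$ without revealing, triggering a restart).

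Second, you flag but do not close the key quantitative step: bounding what \Max loses when he engages and \Min later deviates. Your proposed fix (continuity of $p\mapsto\MP(\RT(\G,p))$ plus a maximality choice of reveal thresholds) is borrowed from Lemma~\ref{lem:partially-lower-bound}, but it does not transfer: there the sequence $(x_i)$ is constructed \emph{a posteriori} to analyse a fixed pair of strategies, whereas here \Max must commit to his spending caps before seeing how the $g_j$ behave. Engaging with a full-information optimal strategy against $C_{\min S}$ can consume essentially all of $B$ before a deviation occurs, destroying the invariant $\frac{B'}{B'+C_i}\ge\frac{B}{B+C_i}$ you want to carry. The paper's resolution is the potential $\Genpot{\cdot}{\cdot}$ itself: the ratio $\rho=\frac{1}{\Genpot{B}{\gamma}}-1$ is chosen so that the ``\Min never exceeds $y_i$'' outcome yields exactly $\Genpot{B}{\gamma}$, property P$_3$ shows a restart depletes \Min's budget by a fixed amount (so only finitely many restarts are possible before $C_1^i$ would go negative), and Jensen's inequality applied to the convex map $\lambda\mapsto\lambda\frac{B-x}{B-\lambda(x+y)}$ shows the potential never decreases across restarts, so the induction hypothesis (on $|\supp(\gamma)|$, invoked when \Min reveals) still delivers $\Genpot{B}{\gamma}-\epsilon$. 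Without an argument of this kind your telescoping claim is unsupported.
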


Before proving the theorem, we note the following.
\begin{remark}
{\bf (Inexistence of a value).} 
Our result implies that the value in partial-information mean-payoff first-price poorman bidding games under pure strategies is not guaranteed to exist. Indeed, consider $\lolli$ with $B = 1$ and $\gamma$ that draws \Min's budget uniformly at random from $\set{1,2}$. By Thm.~\ref{thm:maxmeanpayoff}, one can verify that the optimal choice of $x$ is $1$, thus $\MP^\downarrow(\lolli, B, \gamma) = \frac{1}{3}$. On the other hand, by Thm.~\ref{thm:fully-informed}, we have $\MP^\uparrow(\lolli, B, \gamma) = \frac{5}{12}$. \hfill$\triangle$
\end{remark}

The upper bound is obtained when \Min reveals her true budget immediately and plays according to the strategies described above. The following lemma follows from results on full-information games (Thm.~\ref{thm:FI-MP}). 

\begin{lemma}[Upper bound]\label{lemma:up}
For every $\epsilon>0$, \Min has a collection
of strategies ensuring an expected payoff smaller than
$\Genpot{B}{\gamma} + \epsilon$.
\end{lemma}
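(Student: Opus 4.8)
The plan is to exhibit the explicit ``naive'' collection of \Min strategies described just before the lemma and to bound its expected payoff summand-by-summand, reducing each term to a full-information guarantee. Concretely, for each $C_j \in \supp(\gamma)$ I would let \Min, when her drawn budget is $C_j$ (augmented by the $\delta$ of Def.~\ref{def:values} to absorb tie-breaking and boundary effects), play an $\epsilon$-optimal full-information \Min strategy $g_{C_j} \in \S_2(C_j + \delta)$ for the fixed \Max-ratio $\frac{B}{B+C_j}$; that is, the optimal \Min strategy in the \emph{full-information} bowtie game with budgets $B$ for \Max and $C_j$ for \Min.

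The first step invokes Thm.~\ref{thm:FI-MP} in the first-price poorman case: the full-information value in $\lolli$ for \Max-ratio $r$ equals $\MP\big(\RT(\lolli, r)\big)$, which for the bowtie game is exactly $r$. Since this value exists, \Min holding the complementary budget can cap \Max's payoff, so that for \emph{every} \Max strategy $f$ the strategy $g_{C_j}$ guarantees $\payoff(f, g_{C_j}) \le \MP\big(\RT(\lolli, \frac{B}{B+C_j})\big) + \epsilon = \frac{B}{B+C_j} + \epsilon$.

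The crucial—and essentially only—observation linking this to the partial-information setting is that $g_{C_j}$ is a full-information strategy whose guarantee holds against every \Max strategy, not merely against one tailored to the budget $C_j$. Because \Max is partially informed, he commits to a single strategy $f$ across all of \Min's possible budgets; hence when \Min's budget happens to be $C_j$ the realized payoff is precisely $\payoff(f, g_{C_j})$, and the full-information bound applies verbatim. Summing over $j$ weighted by the respective probabilities, for any \Max strategy $f$ I obtain
\[
\sum_{j=1}^k \gamma(C_j) \cdot \payoff(f, g_{C_j}) \;\le\; \sum_{j=1}^k \gamma(C_j)\Big(\frac{B}{B+C_j} + \epsilon\Big) \;=\; \Genpot{B}{\gamma} + \epsilon,
\]
which is exactly the claimed guarantee for \Min.

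I expect no real obstacle here: the lemma is essentially a direct corollary of the full-information equivalence, since partial information on \Max's side can only restrict him (forcing a single, budget-oblivious strategy) and therefore cannot raise the payoff above what \Min already controls in full information. The only care needed is bookkeeping of $\delta$ and $\epsilon$—choosing each per-budget strategy $\epsilon$-optimal with the $\delta$-augmented budget so that every full-information guarantee is valid, and noting that the same $\epsilon$ propagates unchanged through the finite convex combination. The genuinely hard direction, namely that this naive collection is also \emph{optimal} (the matching lower bound), is deferred to Lemma~\ref{lemma:low}.
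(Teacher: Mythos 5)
Your proposal is correct and follows essentially the same route as the paper: \Min simply reveals her true budget and, for each $C_j$, plays an $\epsilon$-optimal full-information strategy against \Max's budget $B$, after which Thm.~\ref{thm:FI-MP} and linearity of expectation give the bound $\Genpot{B}{\gamma}+\epsilon$ against any single \Max strategy. No differences worth noting.
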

\stam{
\begin{proof}
Let $\epsilon>0$.
\Min ensures the desired payoff by always using an optimal strategy
for her actual initial budget against the initial budget $B$ of Max.
Formally, for every $1 \leq j \leq k$,
when \Min starts with $C_j$ she has a strategy $g_j$ ensuring
the payoff $\frac{B}{B + C_j} - \epsilon$ (Thm.~\ref{thm:FI-MP}).
Then for every strategy $f$ of Max we get
\[
\sum_{j=1}^k \gamma(C_j) \cdot \payoff(f, g_j)
= \sum_{j=1}^k \gamma(C_j) \cdot \frac{B}{B + C_j} - \epsilon,
\]
as desired.
\end{proof}
}

We proceed to the more challenging lower bound and show that there are no \Min strategies that perform better than the naive strategy above.

\begin{lemma}[Lower bound]\label{lemma:low}
For every $\epsilon>0$ and
for every collection $(g_j \in \S_\Min(C_j))_{1 \leq j \leq k}$ of \Min strategies,
\Max has a strategy ensuring an expected payoff greater than $\Genpot{B}{\gamma} - \epsilon$.
\end{lemma}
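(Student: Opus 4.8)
The plan is to read the statement as a best-response problem: \Min first commits to the collection $g_1,\dots,g_k$ (one strategy per possible budget $C_1<\dots<C_k$), this collection is revealed to \Max, and we must exhibit a single \Max strategy $f$ whose weighted payoff $\sum_j \gamma(C_j)\cdot\payoff(f,g_j)$ is at least $\Genpot{B}{\gamma}-\epsilon$. Equivalently, the naive revealing collection of Lemma~\ref{lemma:up} is a \Min-optimal commitment, and no hiding collection can push \Max's best response below the potential. Since \Max knows the $g_j$ and all strategies are deterministic, I would organize the analysis around the \emph{revelation tree}: histories on which the still-consistent index set $S\subseteq\{1,\dots,k\}$ has not yet been split by a disagreement among the $g_j$. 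Two features drive the proof: along a shared history every $g_j$ with $j\in S$ bids the same value, so \Min's spending—and hence the gaps $C^{(j)}-C^{(j')}=C_j-C_{j'}$ among the consistent remaining budgets—are identical across the branch; and because bidding is first-price poorman, \Max's budget is nonincreasing, $B_t\le B$.

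Next I would set up a potential/amortized argument. For a node with consistent set $S$, \Max remaining budget $B^{(j)}$ and \Min remaining budget $C^{(j)}$ on branch $j$ (after a split the per-branch \Max budgets may differ, since one \Max bid wins in some branches and loses in others), define $\Phi=\sum_{j\in S}\gamma(C_j)\cdot\frac{B^{(j)}}{B^{(j)}+C^{(j)}}$. At the root $\Phi=\Genpot{B}{\gamma}$, and for $\lolli$ the full-information value on a single branch with current budgets equals the corresponding summand $\frac{B^{(j)}}{B^{(j)}+C^{(j)}}$ by Theorem~\ref{thm:FI-MP} (since $\MP(\RT(\lolli,r))=r$). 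The goal is to show \Max can choose his bids so that the realized weighted mean-payoff is at least $\Phi_{\mathrm{root}}-\epsilon$. The base case is full information (a singleton $S$), handled by an $\epsilon$-optimal strategy from Theorem~\ref{thm:FI-MP}; since mean-payoff ignores finite prefixes, it suffices that at the revelation events—of which each branch sees only finitely many, as $|S|$ strictly decreases—the invariant $\frac{B^{(j)}}{B^{(j)}+C^{(j)}}\ge\frac{B}{B+C_j}$ still holds for the surviving branches.

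The engine is a one-step inequality together with a monotonicity observation. The observation is that along a shared phase the invariant for the \emph{largest} consistent budget propagates to all smaller ones: from $\frac{B_t}{B_t+C^{(\max S)}_t}\ge\frac{B}{B+C_{\max S}}$, the constant-gap property $C^{(j)}_t=C^{(\max S)}_t-(C_{\max S}-C_j)$ together with $B_t\le B$ yields $\frac{B_t}{B_t+C^{(j)}_t}\ge\frac{B}{B+C_j}$ for every $j\in S$ (a two-line computation using $(B_t-B)\le 0$ and $C_j\le C_{\max S}$). Using the monotonicity and concavity of $x\mapsto\frac{x}{x+C}$, I would then prove the amortized step: at each node \Max has a bid under which the round's weighted reward plus the summed potentials of the resulting children is at least the current $\Phi$. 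Telescoping this uniformly bounded potential over the infinite play gives the $\liminf$ bound $\Genpot{B}{\gamma}-\epsilon$, where $\epsilon$ absorbs the per-branch full-information approximation and the finitely many revelation rounds.

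I expect the main obstacle to be exactly this one-step inequality under the \emph{single-bid constraint}: during a shared phase \Max must commit to one bid for all consistent branches, and—unlike the fully revealed case—he cannot in general guarantee the full-information value $\frac{B}{B+C_j}$ on each branch simultaneously (a hiding \Min can force him to trade a deficit on a small-budget branch for a surplus on a large-budget one). Thus per-branch guarantees genuinely fail and only the weighted inequality survives, which is precisely why the potential $\Phi$—rather than branchwise invariants—must carry the argument; this is the crux of the lower bound. A secondary difficulty is the mean-payoff bookkeeping: bids must shrink to zero over the infinite horizon, so the telescoping has to be carried out with a bounded potential and the $\liminf$ handled with care, and one must confirm that each branch undergoes only finitely many splits so that an $\epsilon$-optimal full-information continuation eventually takes over.
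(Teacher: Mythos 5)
Your outline does land on two of the paper's structural ideas: the quantity you call $\Phi$ is exactly the potential $\Genpot{B}{\gamma}$ evaluated at the current budgets, and your observations that per-branch invariants fail (only the weighted sum can be preserved) and that each branch sees finitely many revelation events mirror the paper's induction on $|\supp(\gamma)|$ with a case split on whether the $g_j$ ever disagree. But the step you yourself flag as ``the crux'' --- the one-step amortized inequality under the single-bid constraint --- \emph{is} the content of the lemma, and you neither specify \Max's bid nor prove the inequality; that is a genuine gap, not a deferred verification. Moreover, I doubt the per-round formulation can work as stated: the potential is bounded in $[0,1]$ while rewards accumulate linearly, so an inequality of the form ``round reward plus children's potential $\geq$ current potential'' is satisfied trivially by a strategy that always loses (reward $0$, and the potential is nondecreasing whenever \Min pays under poorman bidding), and telescoping a bounded potential over infinitely many rounds contributes nothing to a $\liminf$ of averages. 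One needs a frequency argument relating the number of \Max wins over long windows to the potential, and that is exactly where not knowing which target ratio $\frac{B}{B+C_j}$ to aim for genuinely bites.

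The paper resolves this with a phase-based probing construction absent from your outline. \Max stakes a small budget $x$ and plays a full-information $\epsilon$-optimal strategy for the hypothetical budget pair $(x,y)$, where $y=\rho x$ with $\rho=\frac{1}{\Genpot{B}{\gamma}}-1$ is tuned precisely so that the target ratio $\frac{x}{x+y}$ equals the current potential. Then either \Min never spends more than $y$, giving an infinite suffix with payoff at least $\Genpot{B}{\gamma}-\epsilon$; or the strategies $g_j$ split, and the induction hypothesis on a smaller support applies; or \Min burns more than $y$ without splitting, in which case \Max restarts with geometrically shrinking stakes $x_i=\frac{\epsilon}{2}\lambda^i B$. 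The restarts can occur only finitely often because \Min's budget would otherwise go negative, and the potential does not decrease across a restart by a Jensen/convexity argument applied to $\lambda\mapsto\lambda\cdot\frac{B-x}{B-\lambda(x+y)}$. These three ingredients --- the tuned threshold $y$, the geometric stake schedule, and the convexity step --- constitute the actual proof, and none of them appears in your proposal.
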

\begin{proof}
Let $\epsilon > 0$, and let $(g_j \in \S_\Min(C_j))_{1 \leq j \leq k}$ be a collection of \Min strategies.
We construct a counter strategy $f$ of \Max
ensuring an expected payoff greater than $\Genpot{B}{\gamma} - \epsilon$.
The proof is by induction over the size $k$ of the support of $\gamma$.
Obviously, if $k=1$, \Max has perfect information and can follow a full-information optimal strategy to guarantee a payoff of 
$\Genpot{B}{\gamma} = \frac{B}{B+C_1}$ (Thm.~\ref{thm:FI-MP}).
So suppose that $k>1$, and that the statement holds for every budget distribution of \Min
with a support strictly smaller than $k$.

\Max carefully chooses a small part $x \leq B$ of his budget and a part $y \leq C_1$ of \Min's budget. He plays according to a full-information strategy $f$ for initial budgets $x$ and $y$. 
This can result in three possible outcomes: {\bf (O$_1$)} \Min never uses more than $y$: the payoff is $\frac{x}{x+y}$  as in full-information games; {\bf (O$_2$)} \Min reveals her true initial budget, thus \Max can distinguish between the case that \Min's budget is $C_i$ and $C_j$, and by the induction hypothesis he can ensure an expected payoff of $\Genpot{B-x}{\gamma}$ using his remaining budget;
{\bf (O$_3$)} \Min does not reveal her true initial budget and spends more than $y$:  \Max's leftover budget is greater than $B-x$ and, for $1 \leq j \leq k$, when \Min's budget is $C_j$, she has $C_j-y$, and \Max re-starts the loop by selecting a new $x$. 

\stam{
\begin{description}
    \item[O$_1$] \Min never uses more than $y$. Thus, the outcome is as in the full-information setting, and the payoff is $\frac{x}{x+y}$;
    \item[O$_2$] \Min reveals her true initial budget; namely, there are $g_i$ and $g_j$ that prescribe different actions. 
    Then, \Max can distinguish between the case that \Min's budget is $C_i$ and $C_j$, and by the induction hypothesis he can play with his remaining budget to ensure the expected payoff $\Genpot{B-x}{\gamma}$;
    \item[O$_3$] \Min does not reveal her true initial budget and spends more than $y$.
    Thus, \Max's leftover budget is greater than $B-x$ and, for $1 \leq j \leq k$, when \Min's budget is $C_j$, she has $C_j-y$. \Max re-starts the loop and selects a new $x$. 
\end{description}
}

We show that \Max can choose $x$ and $y$ in a way that guarantees that the payoffs obtained
in the first two outcomes are greater than the desired payoff $\Genpot{B}{\gamma} - \epsilon$. 
Also, outcome O$_3$ can occur only finitely many times and the potential there does not decrease. Thus, O$_1$ or O$_2$ occur, ensuring a payoff of at least $\Genpot{B}{\gamma} -\epsilon$. 

Formally, we describe a sequence $(\pi_i,B_i,\gamma_i)_{0 \leq i \leq m}$ of configurations
comprising of a history $\pi_i$ consistent with every strategy $(g_j)_{1 \leq j \leq k}$,
the budget $B_i$ of \Max after $\pi_i$,
and the budget distribution $\gamma_i$ of \Min with $\supp(\gamma_i) =  \{C_1^i,C_2^i, \ldots, C_k^i\}$ following $\pi_i$.
Tuple~$i$ represents the budget and budget distribution of the players following $i-1$ choices of outcome O$_3$. 
Let $\lambda = 1 - \frac{\epsilon}{2}$ and $\rho = \frac{1}{\Genpot{B}{\gamma}} -1$.
We start with $(\pi_0,B_0,\gamma_0) = (v,B,\gamma)$ with $v$ an initial vertex,
and we show recursively how Max can update this tuple while ensuring
that the following four properties are satisfied:    
\begin{itemize}[noitemsep,topsep=0pt]
    \item[P$_1$:]
    The history $\pi_i$ is consistent with every $(g_j)_{1 \leq j \leq k}$;
    \item[P$_2$:]
    Max spends his budget sufficiently slowly:
    $B_i \geq \lambda^i B$;
    \item[P$_3$:]
    Min spends her budget sufficiently fast:
    $C_{j}^{i} \leq C_j - \rho \cdot (1-\lambda^i) B$ for every $1 \leq j \leq k$;
    \item[P$_4$:]
    The potential never decreases: $\Genpot{B_i}{\gamma_i} \geq \Genpot{B}{\gamma}$.
\end{itemize}

Note that for the initial tuple $(\pi_0,B_0,\gamma_0) = (v,B,\gamma)$,
these are trivially satisfied.
Moreover, Property P$_3$ implies an upper bound on $i$, that is, outcome O$_3$ can happen only finitely many times: $\lim_{i \rightarrow \infty} C_{1}^{i} \leq \lim_{i \rightarrow \infty}  C_1 -\rho \cdot (1-\lambda^i) B
= C_1 - \rho \cdot B = B+ C_1 - \frac{B}{\Genpot{B}{\gamma}}
= \frac{1}{\frac{1}{B+C_1}} - \frac{1}{
\sum_{j=1}^k \gamma(C_j) \cdot \frac{1}{B + C_j}}$
\stam{
\begin{align*}
\lim_{i \rightarrow \infty} C_{1}^{i}
&\leq \lim_{i \rightarrow \infty}  C_1 -\rho \cdot (1-\lambda^i) B
= C_1 - \rho \cdot B
= B+ C_1 - \frac{B}{\Genpot{B}{\gamma}}\\
&= \frac{1}{\frac{1}{B+C_1}} - \frac{1}{
\sum_{j=1}^k \gamma(C_j) \cdot \frac{1}{B + C_j}}
\end{align*}

\begin{equation*}
\begin{split}
&\lim_{i \rightarrow \infty} C_{1}^{i} \leq \lim_{i \rightarrow \infty}  C_1 -\rho \cdot (1-\lambda^i) B
= C_1 - \rho \cdot B \\
&= B+ C_1 - \frac{B}{\Genpot{B}{\gamma}}
= \frac{1}{\frac{1}{B+C_1}} - \frac{1}{
\sum_{j=1}^k \gamma(C_j) \cdot \frac{1}{B + C_j}}
\end{split}
\end{equation*}
}
which is negative since $C_1<C_2<\ldots<C_k$, yet a negative $C^i_1$ means that \Min illegally bids higher than her available budget.

We now define the choices $x_i$ and $y_i$ for each $i \in \Nat$, and show that they satisfy the properties described above. 
Let $x_i = \frac{\epsilon}{2} \cdot \lambda^{i} B$
and $y_i = \rho \cdot x_i$. 
For initial budgets $x_i$ and $y_i$, let $f_i$ be a full-information \Max strategy whose payoff is greater than $\frac{x_i}{x_i + y_i} - \epsilon$. \Max follows $f_i$ as long as \Min spends at most $y_i$. Let $(\psi_j)_{1 \leq j \leq k}$ be plays such that for each $1 \leq j \leq k$:
\begin{itemize}[noitemsep,topsep=0pt]
    \item the play $\pi_i\psi_j$ is consistent with the strategy $g_j$;
    \item Max plays according to $f_i$ along $\psi_j$;
    \item $\psi_j$ stops when Min uses more than $y_i$,
    and is infinite if she never does.
\end{itemize}


We consider three possible cases,
depending on whether the paths $\psi_j$ are finite or infinite,
and whether they are distinct or identical.
If they are all infinite, or there are at least two distinct ones,
we show that Max immediately has a way to obtain the desired payoff.
If they are all identical and finite, we show that,
while Max cannot immediately get the desired payoff, 
he can go to the next step by setting $\phi_{i+1} = \phi_i \psi_1$, 
and restarting. 

\noindent
{\bf 1. The play $\psi_j$ is infinite for every $1 \leq j \leq k$.}

\noindent
This situation happens if Min does not spend more than $y_i$.
Since \Max follows the strategy $f_i$ along each $\psi_j$,
the resulting payoff is greater than $\frac{x_i}{x_i + y_i} - \epsilon$. Moreover, the definition of $y_i$ implies:
\[
\frac{x_i}{x_i+y_i} = \frac{x_i}{x_i+\rho \cdot x_i} = \frac{x_i}{x_i+(\frac{1}{\Genpot{B}{\gamma}} - 1)x_i} =
\Genpot{B}{\gamma}.
\]

\noindent
{\bf 2. The plays $(\psi_j)_{1 \leq j \leq k}$ are not all identical.}

\noindent
Let $P_1,P_2,\ldots,P_m$ be the partition of $\{1,2,\ldots,k\}$
such that for every pair $1 \leq j,j' \leq k$,
the plays $\psi_j$ and $\psi_{j'}$ are equal if and only if
$j$ and $j'$ belong to the same $P_\ell$.
Remark that $m \geq 2$ since by supposition the plays $(\psi_j)_{1 \leq j \leq k}$ are not all identical.
We show that \Max can follow some $\psi_j$
until he identifies precisely which $P_\ell$ corresponds to the initial budget of \Min,
which allows us to apply the induction hypothesis,
and to show that \Max can guarantee the desired payoff.

For each $1 \leq \ell \leq m$, the plays $(\psi_j)_{j \in P_\ell}$ are equal by definition,
and we denote this play by~$\chi_\ell$.
We start by trimming the infinite plays into finite plays that still allow \Max to determine the adequate~$P_\ell$:
for every $1 \leq \ell \leq m$,
let $\chi_\ell'$ be a finite prefix of $\chi_\ell$ that is only consistent
with the strategies $g_j$ of \Min satisfying $j \in P_\ell$
(note that if the play $\chi_\ell$ is already finite,
we can set $\chi_\ell' = \chi_\ell$).
Remark  that the play $\chi_\ell'$ occurs with probability exactly $\sum_{j \in P_{\ell}} \gamma_i(C_j^i)$,
which we denote by~$\gamma(P_\ell)$.
After the play~$\phi_i \chi_\ell'$,
the remaining budget of Max is bigger that $B_i - x_i$.
Moreover, since this play is only consistent with the strategies $g_{j}$ of \Min satisfying $j \in P_\ell$,
Max knows that the current distribution of budgets of \Min
is the function $\gamma_{i.\ell}$ defined~by
\[
\gamma_{i.\ell}(C^i_j-y) = \frac{\gamma_i(C^i_j)}{\gamma(P_\ell)},
\]
where $j \in P_{\ell}$ and $y$ denotes the budget spent by \Min along $\chi_\ell'$.

Since $|P_\ell|<k$, the induction hypothesis implies that from this point
\Max can guarantee an expected payoff greater than~$\Genpot{B_i - x_i}{\gamma_{i.\ell}}-\frac{\epsilon}{2}$.
This holds for every $1 \leq \ell \leq m$, therefore
Max can globally guarantee an expected payoff greater than
\begin{equation*}
\begin{split}
    &\sum_{\ell = 1}^m \gamma(P_\ell) \cdot \Genpot{B_i-x_i}{\gamma_{i.\ell}} -\frac{\epsilon}{2} \\
    &= \sum_{\ell = 1}^m \gamma(P_\ell) \cdot \sum_{j \in P_{\ell}} \gamma_{i.\ell}(C_j^i-y) \frac{B_i-x_i}{B_i-x_i + C_j^i-y} -\frac{\epsilon}{2}\\
&\geq
\sum_{\ell = 1}^m \gamma(P_\ell) \cdot \sum_{j \in P_{\ell}} \frac{\gamma_{i}(C_j^i)}{\gamma(P_\ell)} \frac{B_i-x_i}{B_i-x_i + C_j^i}-\frac{\epsilon}{2} \\
&= \sum_{\ell = 1}^m \sum_{j \in P_{\ell}} \gamma_{i}(B_j^i) \frac{B_i-x_i}{B_i-x_i + C_j^i}-\frac{\epsilon}{2}\\
&= \Genpot{B_i-x_i}{\gamma_i}-\frac{\epsilon}{2}
\end{split}
\end{equation*}

To conclude, we show that $\Genpot{B_i-x_i}{\gamma_i} \geq \Genpot{B}{\gamma} - \frac{\epsilon}{2}$.
For all $1 \leq j \leq k$, the definition of $x_i$ and Property P$_2$ imply
\begin{align*}
    \frac{B_i}{B_i+C_j^i} - \frac{B_i-x_i}{B_i-x_i+C_j^i} =
    \frac{C_j^ix_i}{(B_i+C_j^i)(B_i+C_j^i-x_i)} \leq \frac{x_i}{B_i} \leq \frac{\epsilon}{2}.
\end{align*}
Therefore $\Genpot{B_i-x_i}{\gamma_i} \geq \Genpot{B_i}{\gamma_i} - \frac{\epsilon}{2}$,
which translates to $\Genpot{B_i-x_i}{\gamma_i} \geq \Genpot{B}{\gamma} - \frac{\epsilon}{2}$ by Property P$_4$.\\

\noindent
{\bf 3. The plays $(\psi_j)_{1 \leq j \leq k}$ are identical and finite.}

\noindent
If the $\psi_j$ are all equal to a finite play $\psi$,
then we define $\pi_{i+1}$ as the concatenation of $\pi_i$ and $\psi$.
The budget $B_{i+1}$ is obtained by subtracting from $B_i$ the budget spent by \Max along $\psi$.
Moreover, for every $1 \leq j \leq k$,
the distribution $\gamma_{i+1}$ maps the budget $C_j^{i+1}$ obtained by 
subtracting from $C_j^{i}$ the budget spent by \Min along $\psi$
to the probability $\gamma_{i}(C_j^{i}) = \gamma(C_j) \in [0,1]$.

We show that the configuration $(\pi_{i+1},B_{i+1}, \gamma_{i+1})$ satisfies properties P$_1$-P$_4$.
First, Property P$_1$ holds as $\phi_i \psi$ is consistent with every $g_j$.
Second, since Max follows the strategy $f_i$ along $\psi$, he does not spend more than $x_i$.
Therefore, since his budget $B_i$ after $\phi_i$ satisfies Property P$_2$,
so does his budget $B_{i+1}$ after $\phi_i \psi$:
\[
B_{i+1} \geq B_i - x_i \geq
\lambda^iB - \frac{\epsilon}{2} \lambda^iB =
(1-\frac{\epsilon}{2})\lambda^i B =
\lambda^{i+1}B.
\]
Moreover, since Min needs to use more than $y_i$ in order for $\psi$ to stop,
we can also conclude P$_3$:
\begin{align*}  
C_j^{i+1} &\leq C_j^{i} - y_i \leq
C_j - \rho (1-\lambda^i)B - \rho \frac{\epsilon}{2} \lambda^iB
= C_j - \rho (1-\lambda^{i+1})B.
\end{align*}
Finally, we obtain Property P$_4$ as a consequence of Properties P$_2$ and P$_3$.
Let $x$ denote the overapproximation $(1-\lambda^{i+1})B$
of the budget spent by Max since the start of the game,
and let $y$ denote the underapproximation $\rho \cdot (1-\lambda^{i+1})B$
of the budget spent by Min since the start of the game.
Then
\begin{equation*}
\begin{split}
    &\Genpot{B_{i+1}}{\gamma_{i+1}}
 =  \sum_{j=1}^k \gamma(C_j) \frac{B_{i+1}}{B_{i+1} + C_j^{i+1}}
 \geq \ \sum_{j=1}^k \gamma(C_j)
\frac{B - x}{B -x + C_j - y} \\
&= \sum_{j=1}^k \gamma(C_j)
\frac{B}{B+C_j} \cdot \frac{B - x}{B - \frac{B}{B+C_j} (x+y)}
 =  \sum_{j=1}^k \gamma(C_j) f\Big(\frac{B}{B+C_j}\Big),
\end{split}
\end{equation*}
where $f$ is the function mapping $\lambda \in \mathbb{R}$ to $\lambda \cdot \frac{B-x}{B-\lambda (x+y)}$.
As $f$ is convex, we may apply Jensen's inequality,
and use the fact that $\Genpot{B}{\gamma} \cdot (x + y) = x$ to conclude that
\stam{
\begin{align}\label{equ:2}
\Genpot{B_{i+1}}{\gamma_{i+1}}
&\geq \ f\Big(\sum_{j=1}^k \gamma(C^j) \cdot \frac{B}{B+C^j}\Big)
 \ = \ f(\Genpot{B}{\gamma})
 \ = \ \frac{\Genpot{B}{\gamma} \cdot(B-x)}{B-\Genpot{B}{\gamma} \cdot (x+y)}\nonumber\\
&= \ \Genpot{B}{\gamma} \nonumber.\qed 
\end{align}
}
\begin{equation*}
\begin{split}
    \Genpot{B_{i+1}}{\gamma_{i+1}} &\geq  f\Big(\sum_{j=1}^k \gamma(C^j) \cdot \frac{B}{B+C^j}\Big)
  =  f(\Genpot{B}{\gamma}) \\
  &=  \frac{\Genpot{B}{\gamma} \cdot(B-x)}{B-\Genpot{B}{\gamma} \cdot (x+y)} =  \Genpot{B}{\gamma}. 
\end{split}
\end{equation*}
\end{proof}

\section{Discussion and Future Work}
\label{sec:disc}
We initiate the study of partial-information bidding games, and specifically bidding games with partially-observed budgets.
Our most technically challenging results are for one-sided partial-information mean-payoff poorman-bidding games. We show a complete picture in strongly-connected games for the partially-informed player, which is the more important case in practice. By identifying the value for the fully-informed player in the bowtie game, we show that the value in mean-payoff bidding games does not necessarily exist when restricting to pure strategies. 

We discuss open problems in this model. First, we focus on games played on strongly-connected graphs. Reasoning about such games is the crux of the solution to general full-information bidding games. We thus expect that our results will be key in the solution of partial-information bidding games on general graphs. This extension, however, is not straightforward as in the full-information setting, and we leave it as an open question.
Second, we identify the value of the fully-informed player in the bowtie game $\lolli$. Reasoning about $\lolli$ was the crux of the solution to general strongly-connected full-information bidding games. In fact, the same technique was used to lift a solution for $\lolli$ to general strongly-connected games under all the previously-studied bidding mechanisms.
In partial-information games, however, this technique breaks the intricate analysis in the proof of Thm.~\ref{thm:fully-informed}. Again, we expect a solution to the bowtie game to be a key ingredient in the solution to general strongly-connected games, and we leave the problem open. 
Finally, we showed that the value does not necessarily exist under pure strategies. We leave open the problem of developing optimal mixed strategies for the players. 



This work is part of a research that combines formal methods and AI including multi-agent graph games~\cite{AHK02}, 
logics to reason about strategies~\cite{CHP10,MMPV14} and in particular, their application in auctions~\cite{MM+22},
enhancing {\em network-formation games} with concepts from formal methods (e.g.,~\cite{AKT16}), and many more.

\stam{
\section{Reachability Richman-bidding games are equivalent to random-turn games}
\label{app:equiv-reach}

\begin{figure}[ht]
\begin{minipage}[b]{0.4\linewidth}
\centering
\includegraphics[width=\linewidth]{reach.pdf}
\end{minipage}
\hspace{0.1\linewidth}
\begin{minipage}[b]{0.4\linewidth}
\centering
\includegraphics[width=\linewidth]{RTreach.pdf}
\end{minipage}
\caption{{\bf Left:} A reachability bidding game in which the target for \PLi is $t_i$, for $i \in \set{1,2}$. The threshold ratio under Richman and poorman bidding are depicted under each vertex. 
{\bf Right:} The (simplified) uniform random-turn game that corresponds to the game on the left.}
\label{fig:reach}
\end{figure}

Consider the reachability bidding game that is depicted in Fig.~\ref{fig:reach}. We show that under Richman bidding, when the game starts at $v_0$, \PO wins when his ratio is $2/3+\epsilon$. It is dual to show that \PT wins when \PO's ratio is $2/3-\epsilon$, for any $\epsilon > 0$. Thus, the threshold at $v_0$ is $2/3$. The construction for poorman bidding is somewhat similar, though technically more involved. 

\PO starts by bidding $1/3$ at $v_0$. He necessarily wins the bidding since his bid is higher than \PT's budget. He pays \PT and moves the token to $v_1$, thus the budgets at $v_1$ are $1/3 + \epsilon$ and $2/3-\epsilon$. \PO now bids all in. If he wins the bidding, he wins the game. Otherwise, \PT pays him at least $1/3+\epsilon$, thus when the game returns to $v_0$, his budget increases by at least $\epsilon$ to at least $2/3+2\epsilon$. By repeatedly playing according to this strategy, if he does not win the game, his budget at $v_0$ will eventually exceed $3/4$, which suffices for winning two biddings in a row and forcing the game to $t_1$. 

Consider the random-turn game that is depicted in the right side of Fig.~\ref{fig:reach}. The equivalence between the models is as follows: the probability of reaching $t_1$ from $v_j$, for $j \in \set{0,1}$, coincides with the threshold ratio under Richman bidding for \PT. Note that under poorman bidding, since threshold ratios are irrational, such an equivalence is unlikely to exist.
}

\small
\bibliographystyle{plain}
\bibliography{../../ga}

\end{document}